\newtheorem{theorem}{Theorem}
\newtheorem{lemma}[theorem]{Lemma}
\newtheorem{definition}{Definition}
\newtheorem*{conjecture}{Conjecture}
\newtheorem*{remark}{Remark}%%%\newtheorem{proposition}{Proposition}% to get separate numbers for theorem and proposition etc.
\newcommand{\Rmnum}[1]{\expandafter\@slowromancap\romannumeral #1@}
\begin{document}
	\title{Arithmetic autocorrelation distribution of binary $m$-sequences%\thanks{The work was supported by the National Natural Science Foundation of China (NSFC) under Grant 11701553.}
	}

	\author{Xiaoyan~Jing,
		Aixian~Zhang
		and Keqin~Feng
		%\thanks{
			%\emph{2020 Mathematics Subject Classification}: 94A55, 94A60.}
		\thanks{Xiaoyan Jing is with Research Center for Number Theory and Its Applications, Northwest University, Xi'an 710127, China (e-mail: jxymg@126.com).}
		\thanks{ Aixian Zhang is with Department of Mathematical Sciences, Xi'an University of Technology, Xi'an 710054, China (e-mail: zhangaixian1008@126.com).}
		\thanks{ Keqin~Feng is with Department of Mathematical Sciences, Tsinghua University, Beijing 100084, China (e-mail: fengkq@tsinghua.edu.cn).}
		\thanks{This work is supported by the National Natural Science Foundation of China under Grant No. 12031011 and  the Natural Science Basic Research Plan in Shaanxi Province of China under Grant No. 2022JM-017.
		}
	}
	
	% note the % following the last \IEEEmembership and also \thanks -
	% these prevent an unwanted space from occurring between the last a uthor name
	% and the end of the author line. i.e., if you had this:
	%
	% \author{....lastname \thanks{...} \thanks{...} }
	%                     ^------------^------------^----Do not want these spaces!
	%
	% a space would be appended to the last name and could cause every name on that
	% line to be shifted left slightly. This is one of those "LaTeX things". For
	% instance, "\textbf{A} \textbf{B}" will typeset as "A B" not "AB". To get
	% "AB" then you have to do: "\textbf{A}\textbf{B}"
	% \thanks is no different in this regard, so shield the last } of each \thanks
% that ends a line with a % and do not let a space in before the next \thanks.
% Spaces after \IEEEmembership other than the last one are OK (and needed) as
% you are supposed to have spaces between the names. For what it is worth,
% this is a minor point as most people would not even notice if the said evil
% space somehow managed to creep in.

% The paper headers
\markboth{}%
{Shell \MakeLowercase{\textit{et al.}}: Bare Demo of IEEEtran.cls for IEEE Journals}
% The only time the second header will appear is for the odd numbered pages
% after the title page when using the twoside option.
%
% *** Note that you probably will NOT want to include the author's ***
% *** name in the headers of peer review papers.                   ***
% You can use \ifCLASSOPTIONpeerreview for conditional compilation here if
% you desire.

% make the title area
\maketitle

\IEEEpeerreviewmaketitle

\begin{abstract}
	Binary $m$-sequences are ones with the largest period $n=2^m-1$ among the binary sequences produced by linear shift registers with length $m$.  They have a wide range of applications in communication since they have several desirable pseudorandomness such as balance, uniform pattern distribution and ideal (classical) autocorrelation. In his reseach on arithmetic codes, Mandelbaum \cite{9Mand} introduces a 2-adic version of classical autocorrelation of binary sequences, called arithmetic autocorrelation. Later, Goresky and Klapper \cite{3G1,4G2,5G3,6G4} generalize this notion to nonbinary case and develop several properties of arithmetic autocorrelation related to linear shift registers with carry. Recently, Z. Chen et al. \cite{1C1} show an upper bound on arithmetic autocorrelation of binary $m$-sequences and raise a conjecture on absolute value distribution on arithmetic autocorrelation of binary $m$-sequences.
	
	In this paper we present a general formula for computing arithmetic autocorrelations, from which we totally determine the arithmetic autocorrelation distribution of arbitrary binary $m$-sequences. Particularly, the conjecture raised in \cite{1C1} is verified.
\end{abstract}

% Note that keywords are not normally used for peerreview papers.
\begin{IEEEkeywords}
arithmetic autocorrelation, binary $m$-sequences, 2-adic expression
\end{IEEEkeywords}

% creates the second title. It will be ignored for other modes.
\IEEEpeerreviewmaketitle

\section{Introduction}\label{sec1}
Let $q=2^m\ (m\geq2)$, $\mathbb{F}_q$ be the finite field with $q$ elements, $T: \mathbb{F}_q\rightarrow\mathbb{F}_2$ be the trace mapping defined by
$$T(a)=a+a^2+a^{2^2}+\cdots+a^{2^{m-1}}\quad (a\in\mathbb{F}_q).$$

Let $\pi$ be a primitive element of $\mathbb{F}_q$, $\mathbb{F}^{\ast}_q=\mathbb{F}_q\setminus\{0\}=\langle\pi\rangle.$ The binary $m$-sequence with period $n=2^m-1$ is defined by
$$s=(s_{\lambda}=T(\pi^{\lambda}))^{n-1}_{\lambda=0}$$
and shift sequences $s^{(\tau)}=(s^{(\tau)}_{\lambda})^{n-1}_{\lambda=0}\ (1\leq\tau\leq n-1)$ are defined by $s_{\lambda}^{(\tau)}=s_{\lambda+\tau}=T(\pi^{\lambda+\tau})\ (s_{\lambda}^{(0)}=s_{\lambda})$.

Binary $m$-sequences have a wide range of applications in the field of communications such as error-correcting coding, spread spectrum communications, code divison multiple access systems and cryptography since such sequences have the longest period $n=2^m-1$  among all binary sequences generated by linear shift registers with length $m$ and have good pseudorandomness. We list several basic properties of binary $m$-sequences which we need in this paper.

\begin{lemma}\label{le1}
	Let $s=(s_{\lambda})^{n-1}_{\lambda=0}$ be a binary $m$-sequence with period $n=2^m-1$. Then
	
	(1).\ (linearity) All shift sequences $s^{(\tau)}\ (0\leq\tau\leq n-1)$ plus zero sequence form a $\mathbb{F}_2$-vector space of dimension $m$.
	
	(2).\ (uniform pattern distribution) For $1\leq l\leq m$, $a=(a_1,\ldots,a_l)\in \mathbb{F}_2^l$, let
	$$N(a)=\sharp\{0\leq i\leq n-1\mid (s_i,s_{i+1},\ldots,s_{i+l-1})=a\}\quad(\text{where}\  s_{n+\lambda}=s_{\lambda} ).$$
	Then
	\begin{align*}
		N(a)=\left\{ \begin{array}{ll}
			2^{m-l}-1,                             & \mbox{ if}\ a_1=a_2=\cdots=a_l=0 \\
			2^{m-l}, & \mbox{otherwise}
		\end{array}
		\right.
	\end{align*}
	(3).\ (ideal classical autocorrelation) $\sum\limits_{\lambda=0}^{n-1}(-1)^{s_{\lambda
		}+s_{\lambda+\tau}}=-1$ for all $\tau$, $1\leq\tau\leq n-1.$
\end{lemma}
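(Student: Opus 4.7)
The three assertions follow from standard finite field arguments exploiting $\mathbb{F}_2$-linearity of the trace $T$ and the fact that $\{\pi^\lambda : 0 \le \lambda \le n-1\} = \mathbb{F}_q^\ast$. My plan is to treat the three parts in order, each time reducing the claim about the sequence $s$ to an easy counting/linear-algebra statement inside $\mathbb{F}_q$.

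For part (1), I would observe that the $\lambda$-th entry of $s^{(\tau)}$ equals $T(\pi^\tau \cdot \pi^\lambda)$, so the map $\Phi: \mathbb{F}_q \to \mathbb{F}_2^n$ sending $\alpha \mapsto (T(\alpha\pi^\lambda))_{\lambda=0}^{n-1}$ is $\mathbb{F}_2$-linear and sends $\pi^\tau$ to $s^{(\tau)}$ and $0$ to the zero sequence. Its image is therefore exactly the set in question, so it suffices to show $\Phi$ is injective. If $T(\alpha\pi^\lambda)=0$ for all $\lambda$, then $T(\alpha\beta)=0$ for every $\beta\in\mathbb{F}_q$ (since the $\pi^\lambda$ together with $0$ exhaust $\mathbb{F}_q$), which forces $\alpha=0$ because the trace form is nondegenerate.

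For part (2), I would rewrite the condition $(s_i,\dots,s_{i+l-1})=a$ as $(T(\pi^i\cdot\pi^j))_{j=0}^{l-1} = a$ and consider the $\mathbb{F}_2$-linear map $\Psi: \mathbb{F}_q \to \mathbb{F}_2^l$, $x \mapsto (T(x\pi^j))_{j=0}^{l-1}$. For $l\le m$ the elements $1,\pi,\dots,\pi^{l-1}$ are $\mathbb{F}_2$-linearly independent (they lie in no proper subfield because $\pi$ has order $2^m-1$), so by nondegeneracy of the trace pairing $\Psi$ is surjective; hence every fiber has size $2^{m-l}$. The count $N(a)$ is the size of $\Psi^{-1}(a)\cap\mathbb{F}_q^\ast$, which equals $2^{m-l}-1$ when $a=0$ (since $0\in\Psi^{-1}(0)$) and $2^{m-l}$ otherwise.

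For part (3), fix $\tau$ with $1\le\tau\le n-1$ and write $s_\lambda + s_{\lambda+\tau} = T(\pi^\lambda(1+\pi^\tau))$. Since $\pi^\tau\ne 1$, we have $1+\pi^\tau=\pi^{\tau'}$ for some $\tau'$, so as $\lambda$ runs through $0,\dots,n-1$ the element $\pi^\lambda(1+\pi^\tau)=\pi^{\lambda+\tau'}$ also runs through $\mathbb{F}_q^\ast$. Therefore
\begin{equation*}
\sum_{\lambda=0}^{n-1}(-1)^{s_\lambda+s_{\lambda+\tau}} \;=\; \sum_{x\in\mathbb{F}_q^\ast}(-1)^{T(x)} \;=\; -1,
\end{equation*}
using in the last step the balance fact (case $l=1$ of part (2)) that $T$ takes value $0$ on $2^{m-1}-1$ and value $1$ on $2^{m-1}$ elements of $\mathbb{F}_q^\ast$. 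The only mildly delicate points are the injectivity/nondegeneracy invocations in (1) and (2); these are the backbone of the argument, but once the trace pairing on $\mathbb{F}_q$ is known to be nondegenerate they are immediate, so I do not anticipate a serious obstacle.
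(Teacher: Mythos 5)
The paper states Lemma~\ref{le1} without proof, treating all three parts as standard facts about $m$-sequences, so there is no internal argument to compare against. Your proof is correct and is the standard one: part (1) via injectivity of $\alpha\mapsto(T(\alpha\pi^{\lambda}))_{\lambda}$ from nondegeneracy of the trace form, part (2) via surjectivity of $x\mapsto(T(x\pi^{j}))_{j=0}^{l-1}$ (which rests on the linear independence of $1,\pi,\dots,\pi^{l-1}$, more precisely justified by the minimal polynomial of the primitive element $\pi$ having degree $m$ than by the ``no proper subfield'' phrasing), and part (3) via the substitution $1+\pi^{\tau}=\pi^{\tau'}$ together with the balance of $T$ on $\mathbb{F}_q^{\ast}$. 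These are exactly the tools (nondegenerate trace pairing, vanishing of $\sum_{x\in\mathbb{F}_q}(-1)^{T(\alpha x)}$ for $\alpha\neq0$) that the paper itself invokes later in the proofs of Lemmas~\ref{le3.1} and~\ref{le4}, so your argument is fully consistent with the paper's framework.
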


In 1967, Mandelbaum \cite{9Mand} introduced a new version of autocorrelation on binary (periodical) sequences, called arithmetic autocorrelation, in his research on coding theory (arithmetic codes). Later, such notion is generalized to non-binary case and investigated by Goresky and Klapper \cite{3G1,4G2,5G3,6G4} related to their research on sequences generated by linear shift registers with carry.

Now we state the definition of the arithmetic autocorrelation of (periodical)  binary sequences. For any integer $a\geq1$, we have the 2-adic expression of $a$ as
$$a=a_0+a_12+a_22^2+\cdots+a_r2^r\quad (a_i\in\{0,1\},\ r\geq0).$$
The 2-adic weight of $a$ is defined by
$$w_2(a)=\sum_{\lambda=0}^r a_{\lambda}\ (= \text{the number of} \ a_{\lambda}=1).$$
It is obvious that $1\leq w_2(a)\leq r$ and $w_2(2^la)=w_2(a)$ for all $l\geq1$.

Let $a=(a_{\lambda})_{\lambda=0}^{n-1}$ be a binary sequence with period $n\geq2$, $a_{\lambda}\in\{0,1\}$, $a_{n+\lambda}=a_{\lambda}$. For any $\tau$, $0\leq\tau\leq n-1$, the shift sequence $a^{(\tau)}=(a^{(\tau)}_\lambda)_{\lambda=0}^{n-1}$ of $a$ is defined by $a^{(\tau)}=a_{\lambda+\tau}$. Then $a^{(0)}=a$ and $a^{(\tau)}\ (1\leq\tau\leq n-1)$ are distinct sequences.
Let $$\sigma(a)=\sum_{\lambda=0}^{n-1}a_{\lambda}2^{\lambda}\in\mathbb{Z}.$$
By assumption $n\geq2$, we know that $(a_0,\ \ldots,a_{n-1})\neq(0,\ldots,0),\ (1,\ldots,1)$ and $1\leq\sigma(a)\leq2^n-2$. For $1\leq\tau\leq n-1$, let
$$\sigma(a,\tau)=\sigma(a)-\sigma(a^{(\tau)})=\sum_{\lambda=0}^{n-1}(a_{\lambda}-a_{\lambda+\tau})2^{\lambda}\in\mathbb{Z}.$$
From $1\leq\sigma(a),\sigma(a^{\tau})\leq2^n-2$  and $\sigma(a)\neq\sigma(a^{\tau})$ we get $1\leq\arrowvert \sigma(a)-\sigma(a^{\tau})\arrowvert \leq2^n-3$. Let
$$\sigma(a)-\sigma(a^{\tau})=\varepsilon\cdot\sum_{\lambda=0}^{n-1}c_{\lambda}2^{\lambda}\quad(c_{\lambda}\in\{0,1\},\ \varepsilon\in{\{\pm1\}}).$$
Then $1\leq \sum\limits_{\lambda=0}^{n-1}c_{\lambda}2^{\lambda}=\arrowvert \sigma(a)-\sigma(a^{\tau})\arrowvert  \leq2^n-3$ and $1\leq  w_2(\arrowvert\sigma(a)-\sigma(a^{\tau})\arrowvert)\leq n-1.$

\begin{definition}\label{def1}
	Let $a=(a_{\lambda})_{\lambda=0}^{n-1}$ be a binary sequence with period $n\geq2$. For $1\leq\tau\leq n-1$, let
	$$\sigma(a,\tau)=\sigma(a)-\sigma(a^{(\tau)})=\varepsilon\cdot\sum_{\lambda=0}^{n-1}c_{\lambda}2^{\lambda}\quad(c_{\lambda}\in\{0,1\},\ \varepsilon\in{\{\pm1\}}).$$
	The arithmetic autocorrelation $A_a(\tau)$ is defined by
	
	(1). For $\varepsilon=1\ (\sigma(a)>\sigma(a^{(\tau)})),$
	\begin{align*}
		A_a(\tau)&=\text{``number of $c_{\lambda}=1$"-``number of $c_{\lambda}=0$"}\\
		&=2w_2(\sum_{\lambda=0}^{n-1}c_{\lambda}2^{\lambda})-n=2w_2(\arrowvert \sigma(a,\tau)\arrowvert)-n.
	\end{align*}
	
	(2). For $\varepsilon=-1\ (\sigma(a)<\sigma(a^{(\tau)})),$
	\begin{align*}
		A_a(\tau)&=\text{``number of $c_{\lambda}=0$"-``number of $c_{\lambda}=1$"}\\
		&=n-2w_2(\sum_{\lambda=0}^{n-1}c_{\lambda}2^{\lambda})=n-2w_2(\arrowvert \sigma(a,\tau)\arrowvert).
	\end{align*}
\end{definition}
From $1\leq w_2(\sum_{\lambda=0}^{n-1}c_{\lambda}2^{\lambda})\leq n-1$ we know that $\arrowvert A_a(\tau)\arrowvert\leq n-2$ for all $\tau$, $1\leq\tau\leq n-1$.

The arithmetic autocorrelation distribution of $a$ is the following multiset
$$D(a)=\{A_a(\tau)\arrowvert 1\leq\tau\leq n-1\}.$$

Comparing with the classical autocorrelation, there exist few results currently on arithmetic auto (and cross) correlation on binary sequences (\cite{1C1,2C2,3G1,4G2,5G3,6G4,7H1,8H2,9Mand,10W}). It is desirable that the absolute values $\arrowvert A_a(\tau)\arrowvert\ (1\leq\tau\leq n-1)$ are as small as possible. Hofer, Merai and Winterhof \cite{8H2} proved that for a truly random binary sequence with period $n$, $\arrowvert A_a(\tau)\arrowvert=O(n^{\frac{3}{4}}(\log_2n)^{\frac{1}{2}}$. There exist a family of binary sequences, called {\boldmath$l$}-sequences, having the ideal arithmetic autocorrelation which means that $A_a(\tau)=0\ (1\leq\tau\leq n-1)$ \cite{4G2}. Hofer and Winterhof \cite{7H1} present nontrivial bounds of arithmetic autocorrelations on Legendre sequences. Recently Chen et al. proved that $\arrowvert A_s(\tau)\arrowvert\leq 2^{m-1}-1$ for binary {\boldmath$m$}-sequences with period $n=2^m-1$\ (\cite{1C1}, Theorem 1). Based on examples in cases $m$=3 and 4, they raise the following conjecture.

\begin{conjecture}(\cite{1C1}. Conjecture 1) Let $s$ be a binary {\boldmath$m$}-sequence with period $n=2^m-1,\ m\geq2$. Then
	
	1.\ the arithmetic autocorrelation of $s$ satisfies
	$$A_s(\tau)\in\{\pm(2^k-1):\ 1\leq k\leq m-1\}\quad \text{for}\ 1\leq\tau\leq n-1.$$
	
	2.\ for each $k,\ 1\leq k\leq m-1$, the total number of $\tau\ (1\leq\tau\leq n-1)$ with $\arrowvert A_s(\tau)\arrowvert=2^k-1$ is $2^{n-k}.$
\end{conjecture}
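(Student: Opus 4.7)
The plan is to verify both parts of the conjecture by first deriving a structured formula for $\sigma(s,\tau)$ and then extracting from it the value of $|A_s(\tau)|$ and its multiplicity.

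\emph{Step 1: shift-and-add formula.} For each $\tau\in\{1,\ldots,n-1\}$, let $\delta=\delta(\tau)\in\{1,\ldots,n-1\}$ be defined by $\pi^{\delta}=1+\pi^{\tau}$; this is well defined since $\pi^\tau\neq 1$, and $\tau\mapsto\delta$ is an involution because $1+\pi^\delta=\pi^\tau$. The $\mathbb{F}_2$-identity $T(\pi^\lambda)+T(\pi^{\lambda+\tau})=T(\pi^{\lambda+\delta})$ combined with $s_\lambda\in\{0,1\}\subset\mathbb{Z}$ gives (by a case check on $(s_\lambda,s_{\lambda+\tau})$)
\[
s_\lambda-s_{\lambda+\tau}=(2s_\lambda-1)s_{\lambda+\delta}\in\{-1,0,1\},
\]
hence
\[
\sigma(s,\tau)=\sum_{\lambda\in A}2^\lambda-\sum_{\lambda\in B}2^\lambda,\quad A=\{\lambda\colon s_\lambda=s_{\lambda+\delta}=1\},\;B=\{\lambda\colon s_\lambda=0,\,s_{\lambda+\delta}=1\}.
\]
Solving the linear system given by Lemma~\ref{le1}(3) (ideal classical autocorrelation at shift $\delta$) together with the balance of $s$ and $s^{(\delta)}$ (Lemma~\ref{le1}(2) with $l=1$), one gets $|A|=|B|=2^{m-2}$.

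\emph{Step 2 (the main obstacle): the 2-adic weight.} The heart of the argument is to show that for every $\tau$,
\[
w_2(|\sigma(s,\tau)|)\in\{2^{m-1}-2^{k-1},\;2^{m-1}+2^{k-1}-1\}\quad\text{for a unique } k=k(\tau)\in\{1,\ldots,m-1\};
\]
by Definition~\ref{def1} both weights give $|A_s(\tau)|=2^k-1$, which proves part~1. My strategy is to analyse $\sum_{A}2^\lambda-\sum_{B}2^\lambda$ by grouping consecutive indices of constant sign, collapsing each group via $\sum_{i=a}^{b}2^i=2^{b+1}-2^a$, and applying Lemma~\ref{le1}(2) on the induced patterns of $(s_\lambda,s_{\lambda+\delta})$ to classify the residual signed-binary expansion. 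An arithmetic identity in reserve (a candidate for the ``general formula'' announced in the abstract) is
\[
2^\tau\sigma(s,\tau)=(2^\tau-1)\sigma(s)-(2^n-1)T_\tau,\qquad T_\tau=\sum_{\lambda=0}^{\tau-1}s_\lambda 2^\lambda,
\]
coming from the periodicity $2^n\sigma(s)\equiv\sigma(s)\pmod{2^n-1}$.

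\emph{Step 3: counting.} The statement ``$2^{n-k}$'' in part~2 is a typo for $2^{m-k}$ (since $\sum_{k=1}^{m-1}2^{m-k}=2^m-2=n-1$). Having $k(\tau)$ from Step~2, I count $\{\tau\colon k(\tau)=k\}$ by realising $k(\tau)$ as a dimensional or 2-adic invariant of $\tau$ inside $\mathbb{F}_q=\mathbb{F}_{2^m}$---the expected shape is a nested chain of $\mathbb{F}_2$-subspaces whose $k$-th stratum has size $2^{m-k}$. The Frobenius $\pi\mapsto\pi^2$ and the involution $\tau\leftrightarrow\delta$ respect this stratification and help organise the count.

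The crux is Step~2: extracting the exact 2-adic weight from the set-theoretic data $A,B$ requires disciplined bookkeeping of borrows across adjacent runs and identification of $k(\tau)$ as a concrete invariant of $\tau$. Once this is done, Step~3 is a combinatorial identification aided by Lemma~\ref{le1}.
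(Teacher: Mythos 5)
Your proposal contains a genuine gap at exactly the point you flag as ``the crux.'' Step 1 is correct: the shift-and-add property $s_\lambda+s_{\lambda+\tau}=s_{\lambda+\delta}$ with $\pi^\delta=1+\pi^\tau$ does reduce $\sigma(s,\tau)$ to $\sum_{A}2^\lambda-\sum_{B}2^\lambda$ with $|A|=|B|=2^{m-2}$, and your identity $2^\tau\sigma(s,\tau)=(2^\tau-1)\sigma(s)-(2^n-1)T_\tau$ checks out (though you never use it). You are also right that the ``$2^{n-k}$'' in part 2 must be read as $2^{m-k}$, since $\sum_{k=1}^{m-1}2^{m-k}=n-1$. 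But Step 2 is a statement of intent, not an argument: you assert the target $w_2(|\sigma(s,\tau)|)\in\{2^{m-1}-2^{k-1},\,2^{m-1}+2^{k-1}-1\}$ and describe a strategy (``grouping consecutive indices of constant sign, collapsing each group, applying Lemma~\ref{le1}(2)''), but you never carry out the borrow bookkeeping, and --- more importantly --- you never identify \emph{which} invariant of $\tau$ the integer $k(\tau)$ is. Without that identification Step 3 cannot be executed either; saying the fibres of $k(\tau)$ ``should be'' strata of a nested chain of subspaces of sizes $2^{m-k}$ is a restatement of the desired count, not a proof of it.

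For comparison, the paper fills precisely these two holes. Its Theorem~\ref{th1} is the rigorous version of your ``collapse runs of constant sign'' idea: it decomposes the two-row matrix $\binom{s}{s^{(\tau)}}$ into blocks of type $[\alpha,\beta;l]$ and tracks the 2-adic subtraction block by block, yielding $A_s(\tau)=n-2g(\tau)$ with $g(\tau)$ expressed through pattern counts $N^{(\tau)}(\alpha,\beta;l)$. Then Lemma~\ref{le4} evaluates these counts by character sums over $\mathbb{F}_q$, and the answer depends only on the expansion $(1+\pi^\tau)^{-1}=b_0+b_1\pi+\cdots+b_{e-1}\pi^{e-1}+\pi^e$ in the basis $\{1,\pi,\ldots,\pi^{m-1}\}$: one gets $k(\tau)=m-e$ and the sign of $A_s(\tau)$ determined by $b_0$, whence each value $\pm(2^k-1)$ is attained exactly $2^{e-1}=2^{m-k-1}$ times because $(1+\pi^\tau)^{-1}$ ranges bijectively over $\mathbb{F}_q\setminus\{0,1\}$. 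Note that the decisive invariant is the leading degree of $(1+\pi^\tau)^{-1}$, not of $\pi^\delta=1+\pi^\tau$ itself, and it is not an evident function of your $\delta$; discovering and exploiting this is the substantive content missing from your Steps 2 and 3.
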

In this paper we prove this conjecture. Firstly we show several preliminary results on $A_a(\tau)$ and present a formula of $A_a(\tau)$ for general (periodic) binary sequences in Section 2 (Theorem 1). Then we totally determine the arithmetic autocorrelation distribution of all binary {\boldmath$m$}-sequences (Theorem 2) in Section 3. As a direct consequence, the above conjecture is verified.

\section{A general formula of arithmetic autocorrelation}\label{sec2}
In this section we present a formula on arithmetic autocorrelation $A_a(\tau)\ (1\leq\tau\leq n-1)$ for arbitrary binary sequence $a=(a_{\lambda})_{\lambda=0}^{n-1}$ with period $n$. Firstly we need some observations.

Let $a=(a_{\lambda})_{\lambda=0}^{n-1}$ and $b=(b_{\lambda})_{\lambda=0}^{n-1}$ be two distinct binary sequences with period $n\geq2$. We have $2\times n\ \{0,1\}$-matrix
$$M=\left(
\begin{array}{cc}
	a_0\ a_1\ \ldots\ a_{n-1}\\ b_0\ b_1\ \ldots\ b_{n-1}
\end{array}
\right)
\ \ \quad (a_{\lambda},\ b_{\lambda}\in\{0,1\})
$$
and (cyclic) shift matrices of $M$
$$M^{(t)}=\left(
\begin{array}{cc}
	a_t\ a_{t+1}\ \ldots\ a_{t+n-1}\\ b_t\ b_{t+1}\ \ldots\ b_{t+n-1}
\end{array}
\right)
\ \ \quad (0\leq t\leq n-1)
$$
where $a_{n+l}=a_l$, $b_{n+l}=b_l$ and $M^{(0)}=M$. Let
$$\sigma(a^{(t)})=\sum_{\lambda=0}^{n-1}a_{\lambda+t}2^{\lambda},\ \sigma(b^{(t)})=\sum_{\lambda=0}^{n-1}b_{\lambda+t}2^{\lambda}\in\mathbb{Z}.$$
From $a\neq b$ we know that $\sigma(a^{(t)})\neq\sigma(b^{(t)}).$ Let
\begin{align}\label{1}
	A(M^{(t)})=\left\{ \begin{array}{ll}
		n-2w_2(\arrowvert\sigma(a^{(t)})-\sigma(b^{(t)})\arrowvert),                             & \mbox{ if}\ \sigma(a^{(t)})>\sigma(b^{(t)}) \\
		2w_2(\arrowvert\sigma(a^{(t)})-\sigma(b^{(t)})\arrowvert)-n,                             & \mbox{ if}\ \sigma(a^{(t)})<\sigma(b^{(t)}).
	\end{array}
	\right.
\end{align}
\begin{lemma}\label{le2.1}
	With above assumptions and notations, we have
	$$A(M)=A(M^{(t)})\quad \text{for all}\ 1\leq t\leq n-1.$$
\end{lemma}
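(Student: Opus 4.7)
The plan is to reduce the lemma to the single-step case $A(M) = A(M^{(1)})$: once this is established, applying it to the matrix $M^{(t-1)}$ (whose rows are the periodic sequences $a^{(t-1)}, b^{(t-1)}$) and iterating yields $A(M) = A(M^{(1)}) = \cdots = A(M^{(t)})$ for every $1 \le t \le n-1$. The key arithmetic ingredient is the wrap-around identity
\begin{equation*}
2\sigma(a^{(1)}) \;=\; \sigma(a) + a_0 \, (2^n - 1),
\end{equation*}
obtained by a direct manipulation of the 2-adic expansion using $a_{n+\lambda} = a_\lambda$. Subtracting the analogous identity for $b$ and writing $D = \sigma(a) - \sigma(b)$, $D' = \sigma(a^{(1)}) - \sigma(b^{(1)})$, $N = 2^n - 1$, I would obtain the master relation
\begin{equation*}
2 D' \;=\; D + (a_0 - b_0)\, N.
\end{equation*}

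With this in hand I would carry out a short case split on $a_0 - b_0 \in \{-1, 0, 1\}$ and on the sign of $D$. If $a_0 = b_0$, then $D$ is even, $D' = D/2$ has the same sign and the same $2$-adic weight as $D$, so $A(M^{(1)}) = A(M)$ from (1) immediately. If $a_0 = 1, b_0 = 0$ (the case $a_0 = 0, b_0 = 1$ is symmetric after swapping the two rows), then $D$ is odd. For $D > 0$ one has $D' > 0$ and $|D'| = 2^{n-1} + (D-1)/2$; using $w_2(D-1) = w_2(D) - 1$ (the low bit of $D$ is $1$) gives $w_2(|D'|) = w_2(|D|)$, and both formulas in (1) match. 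For $D < 0$ the bound $|D| \le N-2$ (already noted in Section 1) forces $D' > 0$, so the sign flips; writing $|D'| = (N - |D|)/2$ and invoking the bit-complementation identity $w_2(N - |D|) = n - w_2(|D|)$ (valid for $0 < |D| < N$) together with the fact that $|D|$ is odd yields $w_2(|D|) + w_2(|D'|) = n$. Substituting into (1) gives $A(M^{(1)}) = n - 2 w_2(|D'|) = 2 w_2(|D|) - n = A(M)$.

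The main obstacle is the sign-flip sub-case: one must use the precise bound $|D| \le N-2$ to rule out $D' = 0$ and to pin down the sign of $D'$, and the bit-complementation identity has to be tracked carefully together with the parities of $|D|$ and $N$ so that dividing by $2$ really shifts the 2-adic expansion rather than discarding a bit. Everything else is routine bookkeeping in the formula (1), and the induction on $t$ then finishes the proof.
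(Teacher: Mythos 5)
Your proposal is correct and follows essentially the same route as the paper: reduce to the single shift $t=1$ via iteration, split on $(a_0,b_0)$, treat the sign-preserving and sign-flipping sub-cases of $(a_0,b_0)=(1,0)$ using the $n$-bit complementation identity $w_2(2^n-1-x)=n-w_2(x)$, and dispose of $(a_0,b_0)=(0,1)$ by swapping rows; your ``master relation'' $2D'=D+(a_0-b_0)(2^n-1)$ is just a compact form of the paper's displayed identity (2). The only point worth noting is that in the general setting of the lemma (arbitrary distinct periodic $a,b$, not shifts of one sequence) the bound $\lvert D\rvert\le 2^n-3$ is not quoted from Section~1 but follows directly from $(a_0,b_0)=(1,0)$, which gives $\sigma(a)\ge 1$ and $\sigma(b)\le 2^n-2$.
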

\begin{proof}
	We just need to show that $A(M)=A(M^{(1)})$ since the general case can be done from $M^{(t+1)}=(M^{(t)})^{(1)}$. We have
	$$M=\left(
	\begin{array}{cc}
		a_0\ a_1\ \ldots\ a_{n-1}\\ b_0\ b_1\ \ldots\ b_{n-1}
	\end{array}
	\right),
	\ \ \quad M^{(1)}=\left(
	\begin{array}{cc}
		a_1\ \ldots\ a_{n-1}\ a_0 \\ b_1\ \ldots\ b_{n-1}\ b_0
	\end{array}
	\right)
	$$
	and
	\begin{align}\label{2}
		\left. \begin{array}{ll}
			~~~~~~\sigma(a)-\sigma(b)&=\sum\limits_{\lambda=0}^{n-1}(a_{\lambda}-b_{\lambda})2^{\lambda}=(a_{0}-b_{0})+\sum\limits_{\lambda=1}^{n-1}(a_{\lambda}-b_{\lambda})2^{\lambda}\\&=(a_{0}-b_{0})+2\cdot\sum\limits_{\lambda=0}^{n-2}(a_{\lambda+1}-b_{\lambda+1})2^{\lambda}\\
			\sigma(a^{(1)})-\sigma(b^{(1)})&=\sum\limits_{\lambda=0}^{n-1}(a_{\lambda+1}-b_{\lambda+1})2^{\lambda}=\sum\limits_{\lambda=0}^{n-2}(a_{\lambda+1}-b_{\lambda+1})2^{\lambda}+(a_{0}-b_{0})2^{n-1}
		\end{array}
		\right\}
	\end{align}
	$(\Rmnum{1})$ Case $a_0=b_0.$ By (\ref{2}) we have
	$$\sigma(a)-\sigma(b)=2\cdot\sum\limits_{\lambda=0}^{n-2}(a_{\lambda+1}-b_{\lambda+1})2^{\lambda}=2(\sigma(a^{(1)})-\sigma(b^{(1)})).$$
	Therefore $\sigma(a)-\sigma(b)$ and $\sigma(a^{(1)})-\sigma(b^{(1)})$ have the same sign and $w_2(\arrowvert\sigma(a)-\sigma(b)\arrowvert)=w_2(2\arrowvert\sigma(a)^{(1)}-\sigma(b)^{(1)}\arrowvert)=w_2(\arrowvert\sigma(a)^{(1)}-\sigma(b)^{(1)}\arrowvert).$ From (\ref{1}) we get $A(M)=A(M^{(1)})$.
	
	$(\Rmnum{2})$ Case $(a_0,b_0)=(1,0).$ From (\ref{2}) we have
	\begin{align*}
		\begin{array}{ll}
			\sigma(a^{(1)})-\sigma(b^{(1)})=\sum\limits_{\lambda=0}^{n-2}a_{\lambda+1}2^{\lambda}-\sum\limits_{\lambda=0}^{n-2}b_{\lambda+1}2^{\lambda}+2^{n-1}>0,\\
			\sigma(a)-\sigma(b)=1+\sum\limits_{\lambda=1}^{n-1}(a_{\lambda}-b_{\lambda})2^{\lambda}.
		\end{array}
	\end{align*}
	$(\Rmnum{2}.1)$ If $\sum\limits_{\lambda=1}^{n-1}(a_{\lambda}-b_{\lambda})2^{\lambda}\geq0$, then $\sigma(a)-\sigma(b)>0$. From (\ref{2}) we have
	\begin{align*}
		&w_2(\sigma(a)-\sigma(b))=w_2(1+\sum\limits_{\lambda=1}^{n-1}(a_{\lambda}-b_{\lambda})2^{\lambda})=1+w_2(\sum\limits_{\lambda=1}^{n-1}(a_{\lambda}-b_{\lambda})2^{\lambda}),\\
		&w_2(\sigma(a^{(1)})-\sigma(b^{(1)}))=w_2(\sum\limits_{\lambda=0}^{n-2}(a_{\lambda+1}-b_{\lambda+1})2^{\lambda}+2^{n-1})\\
		&~~~~~~~~~~~~~~~~~~~~~~~~~=w_2(\sum\limits_{\lambda=0}^{n-2}(a_{\lambda+1}-b_{\lambda+1})2^{\lambda})+1=w_2(\sigma(a)-\sigma(b))
	\end{align*}
	From (\ref{1}) we get $A(M)=A(M^{(1)})$.
	
	$(\Rmnum{2}.2)$ If $\sum\limits_{\lambda=1}^{n-1}(a_{\lambda}-b_{\lambda})2^{\lambda}<0$, then $\sum\limits_{\lambda=1}^{n-1}(a_{\lambda}-b_{\lambda})2^{\lambda}=2\sum\limits_{\lambda=1}^{n-1}(a_{\lambda}-b_{\lambda})2^{\lambda}\leq-2$ and $\sigma(a)-\sigma(b)\leq1-2<0$.  From (\ref{2}) we have
	\begin{align*}
		&w_2(\arrowvert\sigma(a)-\sigma(b)\arrowvert)=w_2(-1+\sum\limits_{\lambda=1}^{n-1}(b_{\lambda}-a_{\lambda})2^{\lambda})\\
		&~~~~~~~~~~~~~~~~~~~~~~=n-w_2(2^n-1-(-1+\sum\limits_{\lambda=1}^{n-1}(b_{\lambda}-a_{\lambda})2^{\lambda}))\\
		&~~~~~~~~~~~~~~~~~~~~~~=n-w_2(2^n-(\sum\limits_{\lambda=1}^{n-1}(b_{\lambda}-a_{\lambda})2^{\lambda}))\\
		&w_2(\arrowvert\sigma(a^{(1)})-\sigma(b^{(1)})\arrowvert)=w_2(2^{n-1}-\sum\limits_{\lambda=0}^{n-2}(b_{\lambda+1}-a_{\lambda+1})2^{\lambda})\\
		&~~~~~~~~~~~~~~~~~~~~~~~~~~=w_2(2^n-\sum\limits_{\lambda=1}^{n-1}(b_{\lambda}-a_{\lambda})2^{\lambda})=n-w_2(\arrowvert\sigma(a)-\sigma(b)\arrowvert).
	\end{align*}
	From (\ref{1}) we get
	\begin{align*}
		A(M^{(1)})&=n-2w_2(\arrowvert\sigma(a^{(1)})-\sigma(b^{(1)})\arrowvert)=n-2(n-w_2(\arrowvert\sigma(a)-\sigma(b)\arrowvert))\\
		&=2w_2(\arrowvert\sigma(a)-\sigma(b)\arrowvert)-n=A(M).
	\end{align*}
	
	$(\Rmnum{3})$ Case $(a_0,b_0)=(0,1).$ Let
	$$\widetilde{M}=\left(
	\begin{array}{cc}
		b_0\ \ldots\ b_{n-1}\\a_0\ \ldots\ a_{n-1}
	\end{array}
	\right),
	\ \ \quad \widetilde{M}^{(1)}=\left(
	\begin{array}{cc}
		b_1\ \ldots\ b_{n-1}\ b_0\\a_1\ \ldots\ a_{n-1}\ a_0
	\end{array}
	\right).
	$$
	
	From the formula (\ref{1}) we know that $A(M)=-A(\widetilde{M})$,\ $A(M^{(1)})=-A(\widetilde{M}^{(1)})$. By Case $(\Rmnum{2})$ we get that $A(\widetilde{M})=A(\widetilde{M}^{(1)})$. Therefore $A(M)=A(M^{(1)})$.
	
	This completes the proof of Lemma \ref{le2.1}.
\end{proof}
As a direct consequence of Lemma \ref{le2.1}, we obtain the following result which means that for fixed $\tau$, the value $A_a(\tau)$ is a shift-invariant of binary sequence $a$.
\begin{lemma}\label{le2.2}
	Let $a=(a_{\lambda})_{\lambda=0}^{n-1}$ be a binary sequence with period $n\geq2$. Then for any $t$, $1\leq t\leq n-1$, $A_a(\tau)=A_{a^{(t)}}(\tau)$.
\end{lemma}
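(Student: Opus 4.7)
The plan is to deduce Lemma \ref{le2.2} as an immediate specialization of Lemma \ref{le2.1}. The key observation is that the arithmetic autocorrelation $A_a(\tau)$ was defined in Definition \ref{def1} exactly in terms of the quantity $\sigma(a)-\sigma(a^{(\tau)})$ and its 2-adic weight; comparing this with the definition \eqref{1} of $A(M)$, we see that if we set
\[
M=\begin{pmatrix} a_0\ a_1\ \ldots\ a_{n-1}\\ a_\tau\ a_{\tau+1}\ \ldots\ a_{\tau+n-1}\end{pmatrix},
\]
i.e.\ take the second row of $M$ to be the sequence $b=a^{(\tau)}$, then $A(M)=A_a(\tau)$ directly. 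So the first step is to point out that the arithmetic autocorrelation fits into the framework of Lemma \ref{le2.1}.

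Next I would form the $t$-th cyclic shift $M^{(t)}$. Its two rows are $(a_t,a_{t+1},\ldots,a_{t+n-1})$ and $(a_{\tau+t},a_{\tau+t+1},\ldots,a_{\tau+t+n-1})$; that is, they are exactly $a^{(t)}$ and $(a^{(t)})^{(\tau)}$, since shifting first by $\tau$ and then by $t$ gives the same sequence as shifting first by $t$ and then by $\tau$ (both equal $a^{(\tau+t)}$, using $a_{n+\lambda}=a_\lambda$). Therefore, by the same comparison with Definition \ref{def1}, we have $A(M^{(t)})=A_{a^{(t)}}(\tau)$.

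Finally, applying Lemma \ref{le2.1} to this particular $M$ yields $A(M)=A(M^{(t)})$ for every $1\leq t\leq n-1$, which gives $A_a(\tau)=A_{a^{(t)}}(\tau)$ as desired. There is essentially no obstacle here: all the analytic work — handling the three cases $a_0=b_0$, $(a_0,b_0)=(1,0)$, and $(a_0,b_0)=(0,1)$, and controlling the 2-adic weights under a cyclic shift — has already been carried out in the proof of Lemma \ref{le2.1}. The only thing to be careful about is the bookkeeping that the row-shift of $M$ coincides with shifting both $a$ and $a^{(\tau)}$ by the same amount $t$, which is immediate from the definition of cyclic shift.
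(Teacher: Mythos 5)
Your proposal is correct and is essentially identical to the paper's own proof: both take $b=a^{(\tau)}$ in Lemma~\ref{le2.1}, identify $A(M)=A_a(\tau)$ and $A(M^{(t)})=A_{a^{(t)}}(\tau)$, and conclude from $A(M)=A(M^{(t)})$. You simply spell out the bookkeeping (that the rows of $M^{(t)}$ are $a^{(t)}$ and $(a^{(t)})^{(\tau)}$) in more detail than the paper does.
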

\begin{proof}
	Take $b=a^{(\tau)}$ in  Lemma \ref{le2.1}. It is easy to see that $A(M)=A_a(\tau)$ and $A(M^{(t)})=A_{a^{(t)}}(\tau)$. Then from $A(M)=A(M^{(t)})$\ (Lemma \ref{le2.1}) we get $A_a(\tau)=A_{a^{(t)}}(\tau)$.
\end{proof}
We continue to compute the value of $A(M)$ where
$$M=\left(
\begin{array}{cc}
	a_0 \ \ldots\ a_{n-1}\\ b_0 \ \ldots\ b_{n-1}
\end{array}
\right)$$
and $a=(a_{\lambda})_{\lambda=0}^{n-1}$, $b=(b_{\lambda})_{\lambda=0}^{n-1}$ are distinct binary sequences with period $n\geq2$. Now we put a weak assumption.
$$(*)\ \text{There\ exists}\ t,\ 0\leq t\leq n-1\ \text{such\ that}\ (a_t,b_t)=(1,0).$$
Namely, the matrix $M$ have at least one column ${a_t \choose b_t}={1 \choose 0}$. Then the shift matrix $W=M^{(t+1)}$ is
$$W=\left(
\begin{array}{cc}
	a_{t+1}\ \ldots\ a_{t-1}\ a_t\\ b_{t+1}\ \ldots\ b_{t-1}\ b_t
\end{array}
\right)=\left(
\begin{array}{cc}
	a_{t+1}\ \ldots\ a_{t-1}\ 1\\ b_{t+1}\ \ldots\ b_{t-1}\ 0
\end{array}
\right).$$
In this case, $\sigma=\sum\limits_{\lambda=0}^{n-1}a_{\lambda+t+1}2^{\lambda}=\sum\limits_{\lambda=0}^{n-2}a_{\lambda+t+1}2^{\lambda}+2^{n-1}$ is bigger than $\sigma'=\sum\limits_{\lambda=0}^{n-1}b_{\lambda+t+1}2^{\lambda}=\sum\limits_{\lambda=0}^{n-2}b_{\lambda+t+1}2^{\lambda}$, we get
$A(W)=n-2w_2(\sigma-\sigma')$ and $A(M)=A(M^{(t+1)})=A(W)$ by Lemma \ref{le2.1}.

In order to give a closed formula for $A(W)$, we introduce some notations.
\begin{definition}\label{def2}
	For $\alpha,\ \beta,\ \gamma_1,\ \ldots,\ \gamma_l\in\{0,1\}$ and $l\geq0$, we call the matrix
	$$\left(
	\begin{array}{cc}
		\alpha\ \ \  \gamma_1\ \ldots\ \gamma_l\ \ \ \ \beta\\ 1-\alpha\ \gamma_1\ \ldots\ \gamma_l\ 1-\beta
	\end{array}
	\right)$$
	as one with type $[\alpha,\ \beta;l]$.
\end{definition}
\begin{theorem}\label{th1}
	Let $a=(a_{\lambda})_{\lambda=0}^{n-1}$ and $b=(b_{\lambda})_{\lambda=0}^{n-1}$ be distinct binary sequences with period $n\geq2$, and
	$$M=\left(
	\begin{array}{cc}
		a_0\ \ldots\ a_{n-2}\ a_{n-1}\\ b_0\ \ldots\ b_{n-2}\ b_{n-1}
	\end{array}
	\right).$$
	Suppose that there exists $\lambda$, $0\leq\lambda\leq n-1$ such that $(a_{\lambda},b_{\lambda})=(1,0)$. Then
	$$A(M)=n-2g(M)$$
	where $$g(M)=\sum\limits_{l=0}^{n-1}l(N(0,\ 0;l)+N(0,\ 1;l))+\sum\limits_{l=0}^{n-1}(N(1,\ 0;l)+N(1,\ 1;l))$$ and for $\alpha,\ \beta\in\{0,1\}$, $N(\alpha,\ \beta;l)$ is the number of matrices $\left(
	\begin{array}{cc}
		a_{\lambda}\ a_{\lambda+1}\ \ldots\ a_{\lambda+l+1}\\ b_{\lambda}\ b_{\lambda+1}\ \ldots\ b_{\lambda+l+1}
	\end{array}
	\right)$\ $(0\leq\lambda\leq n-1)$ having type $[\alpha,\ \beta;l]$\ $($namely,\ $\left(
	\begin{array}{cc}
		a_{\lambda}\ a_{\lambda+1}\ \ldots\ a_{\lambda+l+1}\\ b_{\lambda}\ b_{\lambda+1}\ \ldots\ b_{\lambda+l+1}
	\end{array}
	\right)=\left(
	\begin{array}{cc}
		\alpha\ \ \  \gamma_1\ \ldots\ \gamma_l\ \ \ \ \beta\\ 1-\alpha\ \gamma_1\ \ldots\ \gamma_l\ 1-\beta
	\end{array}
	\right)$ for some $\gamma_1,\ \ldots,\ \gamma_l\in\{0,1\})$.
\end{theorem}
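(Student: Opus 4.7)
The plan is to first use Lemma~\ref{le2.1} together with assumption $(*)$ to cyclically shift $M$ so that its last column is $\binom{1}{0}$; the quantity $g(M)$ depends only on the cyclic column-pattern of $M$ and is therefore shift-invariant, and $A(M)$ is shift-invariant by Lemma~\ref{le2.1}, so we may assume $(a_{n-1},b_{n-1}) = (1,0)$. In this configuration $\sigma(a) > \sigma(b)$, so by (\ref{1}) we have $A(M) = n - 2w_2(\sigma(a) - \sigma(b))$, and the theorem reduces to proving the identity $g(M) = w_2(\sigma(a) - \sigma(b))$.

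Next I would rewrite $g(M)$ in a more compact form. Set $D^+ = \{\lambda : (a_\lambda,b_\lambda) = (1,0)\}$ and $D^- = \{\lambda : (a_\lambda,b_\lambda) = (0,1)\}$. Every index $\lambda \in D^+ \cup D^-$ is the left endpoint of exactly one submatrix of type $[\alpha,\beta;l]$, with $\alpha = a_\lambda$, $\beta = a_{\lambda'}$, and $l = \lambda' - \lambda - 1$, where $\lambda'$ is the next element of $D^+ \cup D^-$ after $\lambda$ (which exists in $\{0,\ldots,n-1\}$ whenever $\lambda \in D^-$ since $n-1 \in D^+$). Reading off the weights from the definition of $g$ (namely $l$ when $\alpha = 0$ and $1$ when $\alpha = 1$) gives
$$g(M) = |D^+| + \sum_{\lambda \in D^-}(\lambda' - \lambda - 1).$$

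The core step is then the arithmetic identity
$$w_2\Bigl(\sum_{\mu \in D^+} 2^{\mu} - \sum_{\mu \in D^-} 2^{\mu}\Bigr) = |D^+| + \sum_{\lambda \in D^-}(\lambda' - \lambda - 1),$$
which I would prove by induction on $|D^-|$. The base case $|D^-| = 0$ is immediate since the powers $2^{\mu}$ occupy distinct bits. For the inductive step, pick the \emph{largest} $\lambda \in D^-$; its successor $\lambda'$ must then lie in $D^+$ because $n - 1 \in D^+$ is the largest difference index overall. Using $-2^{\lambda} + 2^{\lambda'} = 2^{\lambda} + 2^{\lambda+1} + \cdots + 2^{\lambda'-1}$, I may replace $D^-$ by $D^- \setminus \{\lambda\}$ and $D^+$ by $(D^+ \setminus \{\lambda'\}) \cup \{\lambda,\ldots,\lambda'-1\}$ without changing the underlying integer. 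The new positive bits sit strictly inside the interval $[\lambda,\lambda'-1]$ and so are disjoint from every other element of $D^+ \cup D^-$, and the next-difference map on the remaining elements of $D^-$ is unaltered. The inductive hypothesis applied to the modified sets, combined with the observation that the gain $\lambda' - \lambda - 1$ in $|D^+|$ cancels exactly the gap term dropped from the sum, completes the induction. The delicate point will be this disjointness/successor bookkeeping, but both facts follow directly from choosing $\lambda$ maximal in $D^-$; combining the three steps then yields $A(M) = n - 2g(M)$, as required.
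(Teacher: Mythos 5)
Your proposal is correct, and its overall architecture matches the paper's: both proofs first use Lemma~\ref{le2.1} and the shift-invariance of the counts $N(\alpha,\beta;l)$ (which are defined cyclically) to reduce to the case $(a_{n-1},b_{n-1})=(1,0)$, where $\sigma(a)>\sigma(b)$ and the claim becomes $w_2(\sigma(a)-\sigma(b))=g(M)$; and both then organize $g(M)$ by difference positions, noting that each column where the rows disagree is the left endpoint of exactly one type-$[\alpha,\beta;l]$ window ending at the next disagreement, contributing $1$ when $\alpha=1$ and $l$ when $\alpha=0$. The one place where you genuinely diverge is the justification of the digit count. The paper performs the 2-adic subtraction directly: it cuts the shifted matrix into blocks terminating at disagreement columns and tabulates, case by case on $(\beta_{i-1},\beta_i)$, which digits of the difference are $1$, tracking the borrow propagation explicitly (its Figure 2). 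You instead write $\sigma(a)-\sigma(b)=\sum_{\mu\in D^+}2^\mu-\sum_{\mu\in D^-}2^\mu$ and run an induction on $|D^-|$, repeatedly applying $2^{\lambda'}-2^{\lambda}=2^{\lambda}+\cdots+2^{\lambda'-1}$ to the maximal $\lambda\in D^-$ to convert the expression into a sum of distinct positive powers. Your bookkeeping checks out: choosing $\lambda$ maximal in $D^-$ forces $\lambda'\in D^+$, the inserted run $\{\lambda,\ldots,\lambda'-1\}$ meets no other element of $D^+\cup D^-$, the maximum of the new difference set still lies in the new $D^+$ (even when $\lambda'=n-1$, since $\lambda'-1$ takes over), the successor map on $D^-\setminus\{\lambda\}$ is unchanged, and the count gains $\lambda'-\lambda-1$ in $|D^+|$ exactly as required. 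The trade-off: the paper's borrow analysis is more visual and yields the actual binary digits of the difference, while your induction is cleaner to verify rigorously because it never has to argue about how far a borrow propagates — that is absorbed into the telescoping identity.
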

\begin{proof}
	By assumption $(a_{\lambda},b_{\lambda})=(1,0)$ for some $\lambda$, we have a (cyclic) shift matrix $W$ of $M$
	$$W=\left(
	\begin{array}{ll}
		\alpha_0 \ldots\  \alpha_{n-1}\\
		\varepsilon_0 \ \ldots\  \varepsilon_{n-1}
	\end{array}
	\right)$$
	such that $\left(
	\begin{array}{cc}
		\alpha_{n-1}\\ \varepsilon_{n-1}
	\end{array}
	\right)=\left(
	\begin{array}{cc}
		a_{\lambda}\\ b_{\lambda}
	\end{array}
	\right)=\left(
	\begin{array}{cc}
		1\\ 0
	\end{array}
	\right)$. Let $\sigma(\alpha)=\sum\limits_{\lambda=0}^{n-1}\alpha_{\lambda}2^{\lambda}$, $\sigma(\varepsilon)=\sum\limits_{\lambda=0}^{n-1}\varepsilon_{\lambda}2^{\lambda}$. From $\alpha_{n-1}=1,\ \varepsilon_{n-1}=0$ we know that $\sigma(\alpha)\geq 2^{n-1}>\sigma(\varepsilon).$ Since $W$ is shift  matrix of $M$, we get
	$$A(M)=A(W)=n-2w_2(\sigma(\alpha)-\sigma(\varepsilon)).$$
	We compute $\sigma(\alpha)-\sigma(\varepsilon)$ by 2-adic subtraction on the digits of $\sigma(\alpha)$ and $\sigma(\varepsilon)$ as shown in Figure 1.
	%\begin{table}[h]
	\begin{center}
		\begin{minipage}{200pt}
		\ \ \ \	\textbf{	Figure 1} \ \ 2-adic subtraction $\sigma(\alpha)-\sigma(\varepsilon)$\\
			~\\
			\begin{tabular}{@{}c|lllll@{}}\label{fi1}
				$\sigma(\alpha)$ & $\alpha_0$ & $\alpha_1$  & $\ldots$ &  $\alpha_{n-1}$& $(\alpha_{n-1}=1)$\\
				$\sigma(\varepsilon)$    & $\varepsilon_0$ & $\varepsilon_1$  & $\ldots$ &  $\varepsilon_{n-1}$& $(\varepsilon_{n-1}=0)$\\	\hline
				$\sigma(\alpha)-\sigma(\varepsilon)$    & $\delta_0$ & $\delta_1$  & $\ldots$ &  $\delta_{n-1}$& $(\delta_{\lambda}\in\{0,1\})$
			\end{tabular}
		\end{minipage}
	\end{center}
	%\end{table}
	Then $\sigma(\alpha)-\sigma(\varepsilon)=\sum\limits_{\lambda=0}^{n-1}\delta_{\lambda}2^{\lambda}$ and $w_2(\sigma(\alpha)-\sigma(\varepsilon))$ is the number of $\delta_{\lambda}=1\
	(0\leq\lambda\leq n-1).$
	
	Now we write the matrix $W$ in the following block form
	$$W=\left(W_1\ W_2\ \ldots\ W_m\right)\quad m\geq1.$$
	Each bloack $W_i$ has form
	$$W_i=\left(
	\begin{array}{ll}
		\gamma_1^{(i)}\ \ldots\ \gamma_{l
			_i}^{(i)}\ \ \ \ \beta_i\\ \gamma_1^{(i)}\ \ldots\ \gamma_{l
			_i}^{(i)}\ 1-\beta_i
	\end{array}
	\right)\quad(l_i\geq0)$$
	where $\gamma_1^{(i)},\ldots,\gamma_{l
		_i}^{(i)}\in\{0,1\}$ and the last block is $W_m=\left(
	\begin{array}{cc}
		\gamma_1^{(m)}\ \ldots\ \gamma_{l
			_m}^{(m)}\ 1\\ \gamma_1^{(m)}\ \ldots\ \gamma_{l
			_m}^{(m)}\ 0
	\end{array}
	\right)\ (\beta_m=\alpha_{n-1}=1).$
	Put the last column of $W_{i-1}$ to the left-hand side of $W_i$, we get
	$$\widetilde{W_i}=\left(
	\begin{array}{cc}
		\beta_{i-1}\ \ \  	\gamma_1^{(i)}\ \ldots\ \gamma_{l
			_i}^{(i)}\ \ \ \ \beta_i\\ 1-\beta_{i-1}\ \gamma_1^{(i)}\ \ldots\ \gamma_{l
			_i}^{(i)}\ 1-\beta_i
	\end{array}
	\right)=\left(
	\begin{array}{cc}
		\beta_{i-1}	\\ 1-\beta_{i-1}
	\end{array}W_i
	\right)$$
	which has type $[\beta_{i-1},\ \beta_i;l_i]$. For the first block $W_1$, we assume $\widetilde{W_1}=\left(
	\begin{array}{cc}
		\beta_{m}	\\ 1-\beta_{m}
	\end{array}W
	\right)=\left(
	\begin{array}{ll}
		1\ \gamma_1^{(1)}\ \ldots\ \gamma_{l
			_1}^{(1)}\ \ \ \ \beta_1\\ 0\ \gamma_1^{(1)}\ \ldots\ \gamma_{l
			_1}^{(1)}\ 1-\beta_1
	\end{array}
	\right)$. Then we can see the 2-adic subtraction given in Figure 1 locally. For each $\widetilde{W_i}$ with type $[\beta_{i-1},\ \beta_i;l_i]$, the digits $\delta_{\lambda}$ (in Figure 1) under the columns of $W_i$ can be determined as shown in Figure 2.
	\begin{center}
		\textbf{	Figure 2} \ \ 2-adic subtraction at $W_i$ \\
		~\\
		\begin{minipage}{.45\linewidth}
			\begin{center}
				$(\beta_{i-1},\beta_i)=(1,0)$\\~\\
				\begin{tabular}{@{}cccccc@{}}
					(1) & $\gamma_1^{(i)}$  & $\ldots$ &  $\gamma_{l
						_i}^{(i)}$& 0 & $\ldots$ \\
					(0)   & $\gamma_1^{(i)}$  & $\ldots$ &  $\gamma_{l
						_i}^{(i)}$& 1 & $\ldots$ \\	\hline
					$\ldots$ & 0 & $\ldots$  & 0 &  1 & $\ldots$
			\end{tabular}	\end{center}
		\end{minipage}
		\begin{minipage}{.45\linewidth}
			\begin{center}
				$(\beta_{i-1},\beta_i)=(0,0)$\\~\\
				\begin{tabular}{@{}cccccc@{}}
					(0) & $\gamma_1^{(i)}$  & $\ldots$ &  $\gamma_{l
						_i}^{(i)}$& 0 & $\ldots$ \\
					(1)   & $\gamma_1^{(i)}$  & $\ldots$ &  $\gamma_{l
						_i}^{(i)}$& 1 & $\ldots$ \\	\hline
					$\ldots$ & 1 & $\ldots$  & 1 &  0 & $\ldots$
			\end{tabular}	\end{center}
		\end{minipage}\\~\\~\\
		\begin{minipage}{.45\linewidth}
			\begin{center}
				$(\beta_{i-1},\beta_i)=(1,1)$\\~\\
				\begin{tabular}{@{}cccccc@{}}
					(1) & $\gamma_1^{(i)}$  & $\ldots$ &  $\gamma_{l
						_i}^{(i)}$& 1 & $\ldots$ \\
					(0)   & $\gamma_1^{(i)}$  & $\ldots$ &  $\gamma_{l
						_i}^{(i)}$& 0 & $\ldots$ \\	\hline
					$\ldots$ & 0 & $\ldots$  & 0 &  1 & $\ldots$
			\end{tabular}	\end{center}
		\end{minipage}
		\begin{minipage}{.45\linewidth}
			\begin{center}
				$(\beta_{i-1},\beta_i)=(0,1)$\\~\\
				\begin{tabular}{@{}cccccc@{}}
					(0) & $\gamma_1^{(i)}$  & $\ldots$ &  $\gamma_{l
						_i}^{(i)}$& 1& $\ldots$ \\
					(1)   & $\gamma_1^{(i)}$  & $\ldots$ &  $\gamma_{l
						_i}^{(i)}$& 0 & $\ldots$ \\	\hline
					$\ldots$ & 1 & $\ldots$  & 1 &  0 & $\ldots$
			\end{tabular}	\end{center}
		\end{minipage}
	\end{center}
	From Figure 2 we know that the number $\delta_{\lambda}=1$ given by $W_i$ is 1 if the type of $\widetilde{W_i}$ is $[\beta_{i-1},\ \beta_i;l_i]=[1,\ 0;l_i]$ or $[1,\ 1;l_i]$; or $l_i$ if the type of $\widetilde{W_i}$ is $[\beta_{i-1},\ \beta_i;l_i]=[0,\ 0;l_i]$ or $[0,\ 1;l_i]$. Put all local data together, we get
	\begin{align*}
		&w_2(\sigma(\alpha)-\sigma(\varepsilon))=\sum_{\substack{i=1\\ \text{``type of $\widetilde{W_i}$"=$[1,\ \beta_i;l_i]$}\\\beta_i\in\{0,1\}}}^{m}1+\sum_{\substack{i=1\\ \text{``type of $\widetilde{W_i}$"=$[0,\ \beta_i;l_i]$}\\\beta_i\in\{0,1\}}}^{m}l_i\\
		=&\sum_{l\geq0}l(N_W(0,\ 0;l)+N_W(0,\ 1;l))+\sum_{l\geq0}(N_W(1,\ 0;l)+N_W(1,\ 1;l)),
	\end{align*}
	where for $l\geq0$, $\alpha,\ \beta\in\{0,1\}$, $N_W(\alpha,\ \beta;l)$ is the number of $\left(
	\begin{array}{cc}
		\alpha_{\lambda}\ \alpha_{\lambda+1}\ \ldots\ \alpha_{\lambda+l+1}\\ \varepsilon_{\lambda}\ \varepsilon_{\lambda+1}\ \ldots\ \varepsilon_{\lambda+l+1}
	\end{array}
	\right)\ (0\leq\lambda\leq n-1)$ having type $[\alpha,\ \beta;l]$. Since $W$ is a shift of $M$, the number $N_W(\alpha,\ \beta;l)$ is the same as the number $N(\alpha,\ \beta;l)=N_M(\alpha,\ \beta;l)$. Therefore
	$$w_2(\sigma(\alpha)-\sigma(\varepsilon))=\sum_{l\geq0}l(N(0,\ 0;l)+N(0,\ 1;l))+\sum_{l\geq0}(N(1,\ 0;l)+N(1,\ 1;l)).$$
	At last, if $l\geq n$ and there exists a matrix $\left(
	\begin{array}{cc}
		a_{\lambda}\ a_{\lambda+1}\ \ldots\ a_{\lambda+l}\ a_{\lambda+l+1}\\ b_{\lambda}\ b_{\lambda+1}\ \ldots\ b_{\lambda+l}\ b_{\lambda+l+1}
	\end{array}
	\right)$ of type $[\alpha,\ \beta;l]$ for some $\lambda$, then $a_{\lambda+i}=b_{\lambda+i}\ (i=1,\ldots,n)$ which means $a=b$ since the period of both sequences $a$ and $b$ is $n$. This contradicts to the assumption $a\neq b$. Therefore $N(\alpha,\ \beta;l)=0$ for any $\alpha,\ \beta\in\{0,1\}$ and  $l\geq n$. We get the final formula
	\begin{align*}
		&\sum_{l\geq0}l(N(0,\ 0;l)+N(0,\ 1;l))+\sum_{l\geq0}(N(1,\ 0;l)+N(1,\ 1;l))\\
		=&\sum_{l=0}^{n-1}l(N(0,\ 0;l)+N(0,\ 1;l))+\sum_{l=0}^{n-1}(N(1,\ 0;l)+N(1,\ 1;l))=g(M)
	\end{align*}
	and $A(M)=n-2g(M)$. This completes the proof of Theorem 1.
\end{proof}

By Theorem \ref{th1}, the value of $A(M)$ is expressed by the numbers $N(1,\ 0;l)+N(1,\ 1;l)$ and $N(0,\ 0;l)+N(0,\ 1;l)$\ ($0\leq l\leq n-1,\ N(\alpha,\ \beta;l)=N_M(\alpha,\ \beta;l)$) for $M=\left(
\begin{array}{cc}
	a_{0}\ \ldots\ a_{n-1}\\ b_{0}\ \ldots\ b_{n-1}
\end{array}
\right)$. To determine these numbers is not easy for arbitrary binary sequences $a=(a_{\lambda})_{\lambda=0}^{n-1}$ and $b=(b_{\lambda})_{\lambda=0}^{n-1}$ with period $n\geq2$ in general case. Fortunately, if $a$ is a binary {\boldmath$m$}-sequence and $b$ is a shift sequence of $a$, these numbers can be computed and then the arithmetic autocorrelation distribution of all binary {\boldmath$m$}-sequences can be determined. We do this in next section.

\section{Arithmetic autocorrelation distribution of binary $m$-sequences}\label{sec3}

Let $q=2^m\ (m\geq2)$, $\pi$ be a primitive element of the finite field $\mathbb{F}_q$, which means that $\mathbb{F}^{\ast}_q=\mathbb{F}_q\setminus\{0\}=\langle\pi\rangle$, $\pi$ is the generator of the cyclic multiplicative group $\mathbb{F}^{\ast}_q$.
The binary {\boldmath$m$}-sequence $s=(s_{\lambda})^{n-1}_{\lambda=0}$ with period $n=2^m-1$ is defined by
$s_{\lambda}=T(\pi^{\lambda})\in\mathbb{F}_2\ (0\leq\lambda\leq n)$, where
$$T:\ \mathbb{F}_q\rightarrow\mathbb{F}_2,\ T(\alpha)=\alpha+\alpha^2+\alpha^{2^2}+\cdots+\alpha^{2^{m-1}}$$
is the trace mapping.

In Theorem \ref{th1}, we take $a=s$, $b=s^{(\tau)}=(s^{(\tau)}_{\lambda})_{\lambda=0}^{n-1}$, the shift sequence of $s$ where $s^{(\tau)}_{\lambda}=s_{\lambda+\tau}$. Then
$M=\left(
\begin{array}{ll}
	s_0\ \ \  s_1\ \ \  \ldots\ s_{n-1}\\ s^{(\tau)}_0\ s^{(\tau)}_1\ \ldots\ s^{(\tau)}_{n-1}
\end{array}
\right)=
\left(
\begin{array}{ll}
	s_0\ \ \  s_1\ \  \ldots\ s_{n-1}\\ s_{\tau}\ s_{\tau+1}\ \ldots\ s_{n-1+\tau}
\end{array}
\right)
\ (s_{n+t}=s_t)
$
and  the arithmetic autocorrelation $A_s(\tau)$ is $A(M)$ for $1\leq\tau\leq n-1$. From Theorem \ref{th1} we know that $A_s(\tau)=n-2g(\tau)$ where
\begin{align}\label{3}
	g(\tau)=\sum\limits_{l=0}^{n-1}l(N^{(\tau)}(0,\ 0;l)+N^{(\tau)}(0,\ 1;l))+\sum\limits_{l=0}^{n-1}(N^{(\tau)}(1,\ 0;l)+N^{(\tau)}(1,\ 1;l))
\end{align}
and for $\alpha,\ \beta\in\mathbb{F}_2=\{0,1\}$
\begin{align*}
	&N^{(\tau)}(\alpha,\ \beta;l)\\
	=& \sharp\left\{0\leq t\leq n-1\Bigg\arrowvert
	\begin{array}{ll}
		\left(
		\begin{array}{ll}
			s_t\ \ \  s_{t+1}\ \ldots\ s_{t+l+1}\\ s^{(\tau)}_t\ s^{(\tau)}_{t+1}\ \ldots\ s^{(\tau)}_{t+l+1}
		\end{array}
		\right)=
		\left(
		\begin{array}{cc}
			\alpha\ \ \  \gamma_1\ \ldots\ \gamma_l\ \ \ \ \beta\\ 1-\alpha\ \gamma_1\ \ldots\ \gamma_l\ 1-\beta
		\end{array}
		\right),\
		\gamma_1,\ldots,\gamma_l\in\mathbb{F}_2	
	\end{array} \right\}\\
	=& \sharp\left\{0\leq t\leq n-1\Bigg\arrowvert \begin{array}{ll}
		T(\pi^t)=\alpha,\ T(\pi^{t+\tau})=1+\alpha,\ T(\pi^{t+l+1})=\beta,\ T(\pi^{t+l+1+\tau})=1+\beta,\\
		T(\pi^{t+\lambda})= T(\pi^{t+\lambda+\tau})\ \text{for}\ 1\leq\lambda\leq l
	\end{array}\right\}\\
	=& \sharp\left\{x\in\mathbb{F}^{\ast}_q\Bigg\arrowvert \begin{array}{ll}
		T(x)=\alpha,\ T(\pi^{\tau}x)=1+\alpha,\ T(\pi^{l+1}x)=\beta,\ T(\pi^{l+1+\tau}x)=1+\beta,\\
		T(\pi^{\lambda}x)= T(\pi^{\lambda+\tau}x)\ \text{for}\ 1\leq\lambda\leq l
	\end{array}\right\}.
\end{align*}
In order to determine $A_s(\tau)$ by formula (\ref{3}), we need to compute
\begin{align}\label{4}
	&N^{(\tau)}(0,\ 0;l)+	N^{(\tau)}(0,\ 1;l)\notag\\
	=& \sharp\left\{x\in\mathbb{F}^{\ast}_q\Bigg\arrowvert \begin{array}{ll}
		T(x)=0,\ T(\pi^{\tau}x)=1,\ T(\pi^{l+1}x)+ T(\pi^{l+1+\tau}x)=1,\\
		T(\pi^{\lambda}x)+T(\pi^{\lambda+\tau}x)=0\ \text{for}\ 1\leq\lambda\leq l
	\end{array}\right\}
\end{align}
and
\begin{align}\label{5}
	&N^{(\tau)}(1,\ 0;l)+	N^{(\tau)}(1,\ 1;l)\notag\\
	=& \sharp\left\{x\in\mathbb{F}^{\ast}_q\Bigg\arrowvert \begin{array}{ll}
		T(x)=1,\ T(\pi^{\tau}x)=0,\ T(\pi^{l+1}x)+ T(\pi^{l+1+\tau}x)=1,\\
		T(\pi^{\lambda}x)+T(\pi^{\lambda+\tau}x)=0\ \text{for}\ 1\leq\lambda\leq l
	\end{array}\right\}.
\end{align}

\begin{lemma}\label{le3.1}
	Let $1\leq\tau\leq n-1.$ Then
	
	$(\Rmnum{1}).$\ If $l\geq m$, then $N^{(\tau)}(\alpha,\ \beta;l)=0$ for any $\alpha,\ \beta\in\mathbb{F}_2.$
	
	$(\Rmnum{2}).\begin{array}{ll} \sum\limits_{l=0}^{m-1}(N^{(\tau)}(1,\ 0;l)+	N^{(\tau)}(1,\ 1;l))=\sum\limits_{l=0}^{m-1}(N^{(\tau)}(0,\ 0;l)+	N^{(\tau)}(0,\ 1;l))=2^{m-2}.
	\end{array}$
	
	$(\Rmnum{3}).$\ $g(\tau)=2^{m-2}+\sum\limits_{l=1}^{m-1}l(N^{(\tau)}(0,\ 0;l)+	N^{(\tau)}(0,\ 1;l)).$
\end{lemma}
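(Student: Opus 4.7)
My plan is to exploit a single change of variable throughout: set $y = (\pi^\tau + 1)x$. Since $1 \leq \tau \leq n-1$ and $\pi$ has multiplicative order $n$, we have $\pi^\tau \neq 1$, so $x \mapsto y$ is a bijection of $\mathbb{F}_q^{\ast}$. Because $T$ is $\mathbb{F}_2$-linear,
\[
T(\pi^\lambda x) + T(\pi^{\lambda+\tau} x) \;=\; T(\pi^\lambda y),
\]
so every paired trace condition in the definition of $N^{(\tau)}(\alpha,\beta;l)$ collapses to a single linear condition on $y$. Concretely, any $x$ contributing to $N^{(\tau)}(\alpha,\beta;l)$ yields a $y$ with $T(y)=1$, $T(\pi^\lambda y)=0$ for $1 \leq \lambda \leq l$, and $T(\pi^{l+1}y)=1$. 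I will use this translation repeatedly.

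For Part (I) I will assume $l \geq m$ and derive a contradiction. The conditions on $y$ force $m$ consecutive zeros $T(\pi y)=\cdots=T(\pi^m y)=0$ in the sequence $(T(\pi^j y))_{j \geq 0}$. Since $T(y)=1$ rules out $y=0$, this sequence is a cyclic shift of the binary $m$-sequence $s$. By Lemma \ref{le1}(1), the shifts of $s$ together with the zero sequence form an $m$-dimensional $\mathbb{F}_2$-space; equivalently, each such sequence satisfies a linear recurrence of order $m$ given by the minimal polynomial of $\pi$ over $\mathbb{F}_2$, so any $m$ consecutive values propagate to the whole sequence. The $m$ zeros therefore force $(T(\pi^j y))_j \equiv 0$, contradicting $T(y)=1$. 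Hence $N^{(\tau)}(\alpha,\beta;l) = 0$ whenever $l \geq m$.

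For Part (II) I will note that Part (I) allows replacing the sum over $0 \leq l \leq m-1$ by the full sum over all $l \geq 0$, and this full sum counts exactly the $x \in \mathbb{F}_q^{\ast}$ with $T(x)=\alpha$ and $T(\pi^\tau x)=1+\alpha$. The key bookkeeping is that each such $x$ lies in exactly one summand: take $l$ to be the largest index with $T(\pi^\nu x)=T(\pi^{\nu+\tau}x)$ for all $1\leq\nu\leq l$ (which exists because in the $y$ picture $T(\pi^n y)=T(y)=1\neq 0$ forbids $T(\pi^\lambda y)=0$ for all $\lambda\geq 1$), and set $\beta = T(\pi^{l+1}x)$. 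To compute the count, I will argue that the two $\mathbb{F}_2$-linear functionals $T$ and $x\mapsto T(\pi^\tau x)$ on $\mathbb{F}_q$ are linearly independent (else they coincide, giving $T((\pi^\tau-1)x)\equiv 0$ and hence $\pi^\tau = 1$ by surjectivity of $T$). Thus the joint map $\mathbb{F}_q \to \mathbb{F}_2^2$ is surjective with every fiber of size $2^{m-2}$, and since $(\alpha,1+\alpha)\neq(0,0)$ the fiber over it misses $0$ and contributes exactly $2^{m-2}$ elements of $\mathbb{F}_q^{\ast}$. Part (III) will then be immediate: truncate both sums in (\ref{3}) at $l \leq m-1$ by Part (I), drop the vanishing $l=0$ term in the first sum, and substitute $2^{m-2}$ for the second sum by Part (II).

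The hard part, I expect, is the change of variable $y=(\pi^\tau+1)x$ in Part (I): it is the one step that turns the tangle of paired trace conditions defining a type-$[\alpha,\beta;l]$ submatrix into plain values of a single $m$-sequence shift, after which the order-$m$ linear recurrence immediately forbids runs of length $\geq m$. Parts (II) and (III) are then routine linear algebra and bookkeeping on top of Part (I).
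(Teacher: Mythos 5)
Your proposal is correct and follows essentially the same route as the paper: Part (I) is the paper's shift-and-add argument (your substitution $y=(1+\pi^\tau)x$ is exactly the observation that $(T(\pi^\lambda(1+\pi^\tau)x))_\lambda$ is a nonzero shift of $s$, which cannot contain $m$ consecutive zeros), and Part (III) is the same bookkeeping. The only cosmetic difference is in Part (II), where the paper evaluates the count $\sharp\{x\in\mathbb{F}_q^{\ast}: T(x)=1,\ T(\pi^\tau x)=0\}$ by an additive character sum while you count fibers of the surjective linear map $x\mapsto(T(x),T(\pi^\tau x))$; these are the same computation, and your explicit justification that each qualifying $x$ falls into exactly one summand $(l,\beta)$ is a welcome elaboration of a step the paper leaves implicit.
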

\begin{proof}
	$(\Rmnum{1})$\ Suppose that $l\geq m$ and $N^{(\tau)}(\alpha,\ \beta;l)\geq1$. Then there exists $t$ such that $s_{t+i}=s_{t+i}^{(\tau)}$ for $1\leq i\leq l$. From $1\leq\tau\leq n-1$ we know that $s\neq s^{(\tau)}$ and $s+s^{\tau}=(s_{\lambda}+s_{\lambda}^{(\tau)})_{\lambda=0}^{n-1}$ is a shift sequence of $s$\ (Lemma \ref{le1}. (1)). But there is no pattern $(s_{t+1}+s_{t+1}^{(\tau)},\ldots,s_{t+l}+s_{t+l}^{(\tau)})=(0,\ldots,0)$ in sequence $s+s^{\tau}$ for $l\geq m$\ (Lemma \ref{le1}. (2)). This contradiction shows that $N(\alpha,\ \beta;l)=0$ for $l\geq m$ and any $\alpha,\ \beta\in\mathbb{F}_2.$
	
	$(\Rmnum{2})$\ Since the {\boldmath$m$}-sequence $s$ has no pattern $(0,\ldots,0)\in\mathbb{F}_2^m$, from $(\Rmnum{1})$ and formula (\ref{5}) we can see that
	\begin{align}\label{6}
		&\sum_{l=0}^{m-1}N^{(\tau)}(1,\ 0;l)+N^{(\tau)}(1,\ 1;l)\notag\\
		=& \sharp\left\{x\in\mathbb{F}^{\ast}_q\Bigg\arrowvert T(x)=1,\  T(\pi^{\tau}x)=0\right\}\notag\\
		=&\frac{1}{4}\sum_{x\in\mathbb{F}_q^{\ast}}\big(1-(-1)^{T(x)}\big)\big(1+(-1)^{T(\pi^{\tau}x)}\big)\notag\\
		=&\frac{1}{4}\sum_{x\in\mathbb{F}_q}\big(1-(-1)^{T(x)}\big)\big(1+(-1)^{T(\pi^{\tau}x)}\big)\quad(\text{since}\ 1-(-1)^{T(0)}=1-1=0)\notag\\
		=&\frac{1}{4}\left(q+\sum_{x\in\mathbb{F}_q}\left[(-1)^{T(\pi^{\tau}x)}-(-1)^{T(x)}-(-1)^{T((1+\pi^{\tau})x)}\right]\right)
	\end{align}
	It is well-known that $\sum\limits_{x\in\mathbb{F}_q}(-1)^{T(\alpha x)}=\sum\limits_{x\in\mathbb{F}_q}(-1)^{T(x)}=0$ if $\alpha\in\mathbb{F}_q^{\ast}$. Then by $1+\pi^{\tau}\neq0$\ (for $1\leq\tau\leq n-1$) we know that the summation at right-hand side of (\ref{6}) is zero. Therefore $\sum\limits_{l=0}^{m-1}(N^{(\tau)}(1,\ 0;l)+N^{(\tau)}(1,\ 1;l))=\frac{q}{4}
	=2^{m-2}$.
	
	We can show $\sum\limits_{l=0}^{m-1}(N^{(\tau)}(0,\ 0;l)+N^{(\tau)}(0,\ 1;l))=2^{m-2}$ in similar way.
	
	$(\Rmnum{3})$\ Directly from $(\Rmnum{1})$,\ $(\Rmnum{2})$ and formula (\ref{3}).
\end{proof}
Now we come to place of computing $N^{(\tau)}(0,\ 0;l)+N^{(\tau)}(0,\ 1;l)$
for each $l$,\ $1\leq l\leq m-1$ and $1\leq\tau\leq n-1$. For $1\leq\tau\leq n-1$,\ $1+\pi^{\tau}\neq0,\ 1$, because the order of $\pi$ is $n=q-1$. Since $\{1,\pi,\ldots,\pi^{m-1}\}$ is an $\mathbb{F}_2$-basis of $\mathbb{F}_q$, $(1+\pi^{\tau})^{-1}(\neq0,\ 1)$ has the following unique expression
$$(1+\pi^{\tau})^{-1}=b_0+b_1\pi+\cdots+b_{s-1}\pi^{e-1}+\pi^e$$
where $b_0,\cdots,b_{s-1}\in\mathbb{F}_2$, $1\leq e\leq m-1$. Another thing we need is that the minimal polynomial of $\pi$ over $\mathbb{F}_2$ is an irreducible (primitive) polynomial $f(x)=1+a_1x+\cdots+a_{m-1}x^{m-1}+x^m\ (a_1,\ldots,a_{m-1}\in\mathbb{F}_2)$ and $f(\pi)=0$. Therefore $\pi^m=1+a_1\pi+\cdots+a_{m-1}\pi^{m-1}$.
\begin{lemma}\label{le4}
	Let $q=2^m\ (m\geq2)$, $\pi$ be a primitive element of $\mathbb{F}_q$,\ $s=(s_{\lambda})^{n-1}_{\lambda=0}$  is the binary {\boldmath$m$}-sequence with period $n=q-1$ defined by $s_{\lambda}=T(\pi^{\lambda})$, where $T:\ \mathbb{F}_q\rightarrow\mathbb{F}_2$ is the trace mapping. Let $1\leq\tau\leq n-1$ and
	$$(1+\pi^{e})^{-1}=b_0+b_1\pi+\cdots+b_{s-1}\pi^{e-1}+\pi^e\quad(b_0,\ldots,b_{s-1}\in\mathbb{F}_2,\ 1\leq e\leq m-1)$$
	Then
	
	$(\Rmnum{1}).$\ For $1\leq l\leq m-2$,
	\begin{align*}
		N^{(\tau)}(0,\ 0;l)+N^{(\tau)}(0,\ 1;l)=\left\{
		\begin{array}{ll}
			2^{m-l-3},\ &\text{if}\ l\leq e-2\\
			2^{m-l-3}(1-(-1)^{b_0}),\ &\text{if}\ l= e-1(\geq1)\\
			2^{m-l-3}(1+(-1)^{b_0}),\ &\text{if}\ l\geq e.
		\end{array}\right.
	\end{align*}
	
	$(\Rmnum{2}).$\ For $l=m-1$,
	$$N^{(\tau)}(0,\ 0;m-1)+N^{(\tau)}(0,\ 1;m-1)=\frac{1}{2}(1+(-1)^{b_0}).$$
\end{lemma}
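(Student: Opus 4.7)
The plan is to use the substitution $y = (1+\pi^{\tau})x$, which is a bijection of $\mathbb{F}_q^{\ast}$ since $1+\pi^{\tau} \ne 0$ for $1 \le \tau \le n-1$. By linearity of $T$ and the identity $T(\pi^{\lambda}x) + T(\pi^{\lambda+\tau}x) = T(\pi^{\lambda}(1+\pi^{\tau})x) = T(\pi^{\lambda}y)$, the defining conditions of the set in (\ref{4}) translate into the following constraints on $y$: $T(y) = T(x) + T(\pi^{\tau}x) = 1$; $T(\pi^{\lambda}y) = 0$ for $1 \le \lambda \le l$; $T(\pi^{l+1}y) = 1$; and the surviving requirement $T(x) = 0$ rewritten as $T((1+\pi^{\tau})^{-1}y) = 0$. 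The key tool is that the $\mathbb{F}_2$-linear map $\Phi \colon \mathbb{F}_q \to \mathbb{F}_2^m$ defined by $y \mapsto (T(\pi^{\lambda}y))_{\lambda=0}^{m-1}$ is a bijection (its kernel reduces to $\{0\}$ by non-degeneracy of the trace form), so $y$ is uniquely determined by its first $m$ trace values, and counting $y$ amounts to counting tuples in $\mathbb{F}_2^m$ meeting linear constraints.

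For part (\Rmnum{1}) with $1 \le l \le m-2$, the first $l+2$ conditions fix the coordinates $(T(\pi^{\lambda}y))_{\lambda=0}^{l+1}$ and leave $m-l-2$ free coordinates, yielding $2^{m-l-2}$ provisional candidates. Plugging the expansion $(1+\pi^{\tau})^{-1} = b_0 + b_1\pi + \cdots + b_{e-1}\pi^{e-1} + \pi^e$ into the last constraint produces a single linear equation $b_0 T(y) + \sum_{i=1}^{e-1} b_i T(\pi^i y) + T(\pi^e y) = 0$ in these coordinates. Three subcases then arise: if $l \le e-2$, then $T(\pi^e y)$ is a free coordinate appearing with coefficient $1$, so the equation genuinely halves the count to $2^{m-l-3}$; if $l = e-1$, then $T(\pi^e y) = T(\pi^{l+1}y) = 1$ is already fixed while all $T(\pi^i y)$ with $1 \le i \le e-1$ vanish, and the equation collapses to $b_0 + 1 = 0$, so the $2^{m-l-2}$ provisional candidates either all survive or all fail according to $b_0$; if $l \ge e$, then $T(\pi^i y) = 0$ for every $1 \le i \le e$, so the equation reduces to $b_0 = 0$. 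Rewriting the last two outcomes through $1 \mp (-1)^{b_0}$ yields the stated formulas.

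Part (\Rmnum{2}), the case $l = m-1$, must be handled separately because the conditions $T(y) = 1$ and $T(\pi^{\lambda}y) = 0$ for $1 \le \lambda \le m-1$ already pin down a unique $y$ through $\Phi$, leaving no free coordinates. One must then verify that $T(\pi^m y) = 1$ holds automatically, which follows from the minimal polynomial relation $\pi^m = 1 + a_1\pi + \cdots + a_{m-1}\pi^{m-1}$ via $T(\pi^m y) = T(y) + \sum_{i=1}^{m-1} a_i T(\pi^i y) = 1$. The final constraint simplifies to $b_0 = 0$, so this unique $y$ is counted iff $b_0 = 0$, producing $\tfrac{1}{2}(1 + (-1)^{b_0})$. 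The main obstacle is careful bookkeeping at the boundary subcases $l = e-1$ and $l = e$ of part (\Rmnum{1}), where the free-versus-fixed status of $T(\pi^e y)$ flips and interacts with the prescribed value of $T(\pi^{l+1}y)$; this interaction is exactly what produces the $(-1)^{b_0}$ correction factors. Part (\Rmnum{2}) additionally requires invoking the minimal polynomial of $\pi$ to certify that the otherwise overdetermined system is consistent.
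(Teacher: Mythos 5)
Your proof is correct, and it takes a genuinely different (dual) route from the paper's. The paper expands the counting function as a product of character indicators $\frac{1}{2}(1\pm(-1)^{T(\cdot)})$, applies orthogonality of additive characters to kill all terms with nonzero frequency, and is left with a $(-1)^{c_0+c_{l+1}}$-weighted count of tuples $(c,c_0,\ldots,c_{l+1})\in\mathbb{F}_2^{l+3}$ satisfying $c_0+c_1\pi+\cdots+c_{l+1}\pi^{l+1}=c(1+\pi^{\tau})^{-1}$, which it resolves by comparing coefficients in the basis $\{1,\pi,\ldots,\pi^{m-1}\}$ (invoking the minimal polynomial of $\pi$ only when $l=m-1$ forces the term $\pi^m$ to appear). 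You instead work in the ``primal'' picture: the substitution $y=(1+\pi^{\tau})x$ converts all the pattern conditions into prescribed values of the coordinates $T(\pi^{\lambda}y)$, the nondegeneracy of the trace form makes $y\mapsto(T(\pi^{\lambda}y))_{\lambda=0}^{m-1}$ a bijection onto $\mathbb{F}_2^m$, and the residual condition $T(x)=0$ becomes one affine equation $b_0T(y)+\sum_{i=1}^{e-1}b_iT(\pi^iy)+T(\pi^ey)=0$ whose interaction with the fixed coordinates produces exactly the three subcases and the $(-1)^{b_0}$ factors; your case analysis (free $T(\pi^e y)$ when $l\le e-2$, the collision $T(\pi^e y)=T(\pi^{l+1}y)=1$ when $l=e-1$, full vanishing when $l\ge e$, and the consistency check $T(\pi^m y)=1$ via the minimal polynomial when $l=m-1$) is complete and matches the stated values. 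The two arguments are linear-algebra duals of one another; yours avoids character sums entirely and makes the role of the leading coefficient $1$ of $\pi^e$ in $(1+\pi^{\tau})^{-1}$ more transparent, while the paper's is more mechanical and generalizes routinely to other pattern-counting sums of the same shape.
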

\begin{proof}
	From formula (\ref{4}) we know that
	\begin{align*}
		N^{(\tau)}(0,\ 0;&l)+N^{(\tau)}(0,\ 1;l)\\
		=\frac{1}{2^{l+3}}\sum_{x\in\mathbb{F}_q^{\ast}}&\big((1+(-1)^{T(x)})(1-(-1)^{T(\pi^{\tau}x)})(1-(-1)^{T(\pi^{l+1}x+\pi^{l+1+\tau}x)})\prod\limits_{\lambda=1}^l(1+(-1)^{T(\pi^{\lambda}x+\pi^{\lambda+\tau}x)})\big)\\
		=\frac{1}{2^{l+3}}\sum_{x\in\mathbb{F}_q}&\big((1+(-1)^{T(x)})(1-(-1)^{T(\pi^{\tau}x)})(1-(-1)^{T(\pi^{l+1}(1+\pi^{\tau})x)})\prod\limits_{\lambda=1}^l(1+(-1)^{T(\pi^{\lambda}(1+\pi^{\tau})x)})\big)\\
		&~~~~~~~~~~~~~~~~~~~~~~~~~~~~~~~~~~~~~~~~~~~~~~~~~~~~~~~\quad(\text{for}\ x=0,\ 1-(-1)^{T(\pi^{\tau}x)}=0)\\
		=\frac{1}{2^{l+3}}\sum_{x\in\mathbb{F}_q}&\sum_{c',c_0,\ldots,c_{l+1}=0}^{1}(-1)^{T[x(c'+c_0\pi^{\tau}+(1+\pi^{\tau})(c_1\pi+c_2\pi^2+\cdots+c_{l+1}\pi^{l+1}))]}(-1)^{c_0+c_{l+1}}\\
		=\frac{1}{2^{l+3}}\sum_{x\in\mathbb{F}_q}&\sum_{c,c_0,\ldots,c_{l+1}=0}^{1}(-1)^{T[x(c+(1+\pi^{\tau})(c_0+c_1\pi+c_2\pi^2+\cdots+c_{l+1}\pi^{l+1}))]}(-1)^{c_0+c_{l+1}}\quad(\text{let}\ c=c'+c_0)
	\end{align*}
	If $\delta=c+(1+\pi^{\tau})(c_0+c_1\pi+c_2\pi^2+\cdots+c_{l+1}\pi^{l+1})\neq 0$, then $\sum_{x\in\mathbb{F}_q}(-1)^{T(\delta x)}=0$. Therefore
	\begin{align}\label{7}
		N^{(\tau)}(0,\ 0;l)+N^{(\tau)}(0,\ 1;l)=\frac{q}{2^{l+3}}\sum_{\substack{c,c_0,\ldots,c_{l+1}=0\\c_0+c_1\pi+\cdots+c_{l+1}\pi^{l+1}=c(1+\pi^{\tau})^{-1}}}^{1}(-1)^{c_0+c_{l+1}}
	\end{align}
	where the summation in the right-hand side is under the condition
	\begin{align}\label{8}
		c_0+c_1\pi+\cdots+c_{l+1}\pi^{l+1}=c(1+\pi^{\tau})^{-1}=c(b_0+b_1\pi+\cdots+b_{e-1}\pi^{e-1}+\pi^e).
	\end{align}
	From now on we write $N^{(\tau)}(0,\ 0;l)+N^{(\tau)}(0,\ 1;l)$ by $N^{(\tau)}(l)$ briefly.
	
	$(\Rmnum{1})$\ For $1\leq l\leq m-2$, $l+1\leq m-1$ and $e\leq m-1$. Since $\{1,\pi,\ldots,\pi^{m-1}\}$ is a basis of $\mathbb{F}_q$ over $\mathbb{F}_2$, the coefficients of $\pi^{\lambda}$ in both sides of formula (\ref{8}) should be equal for each $\lambda$, $0\leq\lambda\leq m-1$.
	
	$(\Rmnum{1}.1)$\ If $l\leq e-2$, then $e\geq l+2$. From the coefficients of $\pi^{e}$ we get $0=c$. Then $c_{\lambda}=cb_{\lambda}=0$ for all  $\lambda$, $0\leq\lambda\leq l+1\ (\leq e-1)$. By formula ($\ref{7}$) we get $N^{(\tau)}(l)=\frac{q}{2^{l+3}}\cdot 1=2^{m-l-3}$.
	
	$(\Rmnum{1}.2)$\ For $l= e-1$, formula ($\ref{8}$) becomes $c_0+c_1\pi+\cdots+c_{l+1}\pi^{l+1}=c(b_0+b_1\pi+\cdots+b_{l}\pi^{l}+\pi^{l+1})$. Therefore $c_{l+1}=c$ and $c_{\lambda}=cb_{\lambda}\ (0\leq\lambda\leq l)$. For $c=0$, we have $c_0=c_1=\cdots=c_{l+1}=0$ and $c_0+c_{l+1}=0$. For $c=1$, we have $c_{l+1}=1$, $c_{\lambda}=b_{\lambda}\ (0\leq\lambda\leq l)$ and $c_0+c_{l+1}=b_0+1$. We obtain $N^{(\tau)}(e-1)=2^{m-l-3}(1+(-1)^{b_0+1})=2^{m-l-3}(1-(-1)^{b_0})$.
	
	$(\Rmnum{1}.3)$\ For $l\geq e$, then $c_{\lambda}=cb_{\lambda}\ (0\leq\lambda\leq e-1)$, $c_e=c$, $c_{\lambda}=0\ (e+1\leq\lambda\leq l+1)$ and $c_0+c_{l+1}=cb_0$. From formula ($\ref{7}$) we get $$\begin{array}{ll}
		N^{(\tau)}(l)=2^{m-l-3}(&~~~~~1~~~~~+~~(-1)^{b_0}~~).\\
		&\small{\text{(for\ c=0)}~~~~~~~~\text{(for\ c=1)}}
	\end{array}$$
	
	$(\Rmnum{2})$\ For $l=m-1$, formula ($\ref{8}$) becomes
	$$c(b_0+b_1\pi+\cdots+b_{e-1}\pi^{e-1}+\pi^e)=c_0+c_1\pi+\cdots+c_{m-1}\pi^{m-1}+c_{m}\pi^{m}.$$
	Let $f(x)=1+a_1x+\cdots+a_{m-1}x^{m-1}+x^m\ (a_i\in\mathbb{F}_2)$
	be the minimal polynomial of $\pi$ over $\mathbb{F}_2$. Then $\pi^m=1+a_1\pi+\cdots+a_{m-1}\pi^{m-1}$ and
	$$\begin{array}{ll}c(b_0+b_1\pi+\cdots+b_{e-1}\pi^{e-1}+\pi^e)
		=c_0+c_1\pi+\cdots+c_{m-1}\pi^{m-1}+c_{m}(1+a_1\pi+\cdots+a_{m-1}\pi^{m-1})\end{array}$$
	which imply that
	$$\begin{array}{ll}
		cb_0=c_0+c_m=c_0+c_{l+1},\ cb_{\lambda}=c_{\lambda}+c_{l+1}a_{\lambda}\ (1\leq\lambda\leq e-1),\ c=c_e+c_{l+1}a_e,\\
		0=c_{\lambda}+c_{l+1}a_{\lambda}\ (e+1\leq\lambda\leq m-1=l).
	\end{array}$$
	For $c=0$ we have $c_0=c_{l+1}$, $c_{\lambda}=c_{l+1}a_{\lambda}\ (1\leq\lambda\leq l)$. For $c=1$ we have $c_0=b_0+c_{l+1}$, $c_{\lambda}=b_{\lambda}+c_{l+1}a_{\lambda}\ (1\leq\lambda\leq e-1)$, $c_e=1+c_{l+1}a_e$, $c_{\lambda}=c_{l+1}a_{\lambda}\ (e+1\leq\lambda\leq l)$. By formula ($\ref{7}$) we get
	\begin{align*}
		&\begin{array}{ll}
			N^{(\tau)}(m-1)=2^{m-l-3}(\sum\limits_{c_{l+1}=0}^{1}(-1)^{c_{l+1}+c_{l+1}}+\sum\limits_{c_{l+1}=0}^{1}(-1)^{b_0+c_{l+1}+c_{l+1}})\\
			~~~~~~~~~~~~~~~~~~~~~~~~~~~~~~~\small{\text{(for\ c=0)}~~~~~~~~~~~~~~~~~~\text{(for\ c=1)}}\end{array}
		=2^{m-l-2}(1+(-1)^{b_0})=\frac{1}{2}(1+(-1)^{b_0}).
	\end{align*}
	This completes the proof of Lemma \ref{le4}.
\end{proof}

Now we show our main result on arithmetic autocorrelation distribution $\{A_s(\tau)\arrowvert1\leq\tau\leq n\}$ of any binary {\boldmath$m$}-sequence with period $n=2^m-1$.
\begin{theorem}\label{th2}
	Let $m\geq2$, $n=2^m-1$, $s=(s_{\lambda}=T(\pi^{\lambda}))^{n-1}_{\lambda=0}$ be the binary {\boldmath$m$}-sequence with period $n$ where $\pi$ is a primitive element of $\mathbb{F}_q\ (q=2^m)$ and $T:\ \mathbb{F}_q\rightarrow\mathbb{F}_2$ is the trace mapping. For $1\leq\tau\leq n-1$, let $A(\tau)=A_s(\tau)$ be the arithmetic autocorrelation of $s$ and
	$$(1+\pi^{\tau})^{-1}=b_0+b_1\pi+\cdots+b_{e-1}\pi^{e-1}+\pi^e\quad(1\leq e\leq m-1,\ b_0,b_1,\ldots,b_{e-1}\in\mathbb{F}_2).$$
	
	Then
	
	$(\Rmnum{1})$. For
	$N^{(\tau)}(l)=N^{(\tau)}(0,\ 0;l)+N^{(\tau)}(0,\ 1;l)$,
	\begin{align*}
		\sum_{l=0}^{m-1}lN^{(\tau)}(l)=\left\{
		\begin{array}{ll}
			2^{m-2}+2^{m-e-1}-1,\ &\text{if}\ b_0=0\\
			2^{m-2}-2^{m-e-1},\ &\text{if}\ b_0=1.\\
		\end{array}\right.
	\end{align*}
	
	$(\Rmnum{2})$.\ (arithmetic autocorrelation distribution)\ For all $\tau$, $1\leq\tau\leq n-1$, $A_s(\tau)\in\{\pm(2^k-1)\arrowvert1\leq k\leq m-1\}$. Moreover, for each $\varepsilon\in\{0,1\}$ and $0\leq k\leq n-1$, the number of $\tau\ (1\leq\tau\leq n-1)$ such that $A_s(\tau)=\varepsilon(2^k-1)$ is $2^{m-k-1}$.
\end{theorem}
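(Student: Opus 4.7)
The plan is to derive Part (I) by direct substitution of the explicit values from Lemma \ref{le4} into the sum $\sum_{l=1}^{m-1} l N^{(\tau)}(l)$, and then to obtain Part (II) from Part (I) together with a short bijective counting argument.

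For Part (I), I split into the two cases $b_0 = 0$ and $b_0 = 1$. Lemma \ref{le4} gives that in the case $b_0 = 1$ only the range $1 \leq l \leq e-1$ contributes, with $N^{(\tau)}(l) = 2^{m-l-3}$ for $l \leq e-2$ and $N^{(\tau)}(e-1) = 2^{m-e-1}$; in the case $b_0 = 0$ three ranges contribute, namely $1 \leq l \leq e-2$ (with $N^{(\tau)}(l) = 2^{m-l-3}$), $e \leq l \leq m-2$ (with $N^{(\tau)}(l) = 2^{m-l-2}$), and the single term $l = m-1$ contributing $m-1$. Each subsum is a finite arithmetic-geometric series evaluable in closed form by the identity $\sum_{l=1}^{N} l \cdot 2^{-l} = 2 - (N+2) \cdot 2^{-N}$. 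After collecting terms the first case simplifies to $2^{m-2} - 2^{m-e-1}$ and the second to $2^{m-2} + 2^{m-e-1} - 1$.

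For Part (II), combining Part (I) with Lemma \ref{le3.1} (III) gives $g(\tau) = 2^{m-1} - 2^{m-e-1}$ when $b_0 = 1$ and $g(\tau) = 2^{m-1} + 2^{m-e-1} - 1$ when $b_0 = 0$. Substituting into $A_s(\tau) = (2^m-1) - 2 g(\tau)$ yields
\begin{align*}
A_s(\tau) = (-1)^{b_0+1}\bigl(2^{m-e} - 1\bigr),
\end{align*}
so with $k = m-e$ we obtain $A_s(\tau) \in \{\pm(2^k - 1) : 1 \leq k \leq m-1\}$. For the distribution, observe that $\tau \mapsto (1 + \pi^\tau)^{-1}$ is a bijection from $\{1, \ldots, n-1\}$ onto $\mathbb{F}_q \setminus \{0, 1\}$. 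Those $y \in \mathbb{F}_q \setminus \{0, 1\}$ whose expansion in the basis $\{1, \pi, \ldots, \pi^{m-1}\}$ has prescribed constant coefficient $b_0$, leading exponent exactly $e$, and no higher terms, are parametrized freely by the middle coefficients $b_1, \ldots, b_{e-1}$; there are $2^{e-1} = 2^{m-k-1}$ such $y$, and the constraints $y \neq 0, 1$ are automatic as soon as $e \geq 1$. This gives exactly $2^{m-k-1}$ values of $\tau$ for each sign and each $k \in \{1, \ldots, m-1\}$, and the totals sum to $2\sum_{k=1}^{m-1} 2^{m-k-1} = 2^m - 2 = n-1$, matching the count of shifts.

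The main obstacle is the careful bookkeeping in Part (I): the behavior at the boundary indices $l = e-1$, $l = e$, and $l = m-1$ differs between the two cases and must be tracked separately, and the degenerate parameter values $e = 1$ (empty small-$l$ block) and $e = m-1$ (empty large-$l$ block, with only the $l = m-1$ term surviving in the $b_0 = 0$ case) must be verified to produce the same closed-form answer. Once Part (I) is secured, the passage from the pair $(b_0, e)$ to the sign and magnitude of $A_s(\tau)$ is algebraic, and the counting in Part (II) reduces to the elementary observation above.
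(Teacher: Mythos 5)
Your proposal is correct and follows essentially the same route as the paper: substitute the values from Lemma \ref{le4} into $\sum_l lN^{(\tau)}(l)$ case by case in $b_0$ using the identity $\sum_{l=1}^{d}l\cdot2^{-l}=2-(d+2)2^{-d}$ (including the degenerate $e=1,2$ checks), combine with Lemma \ref{le3.1}(III) to get $A_s(\tau)=(-1)^{b_0+1}(2^{m-e}-1)$, and count via the bijection $\tau\mapsto(1+\pi^{\tau})^{-1}$ onto $\mathbb{F}_q\setminus\{0,1\}$ parametrized by leading exponent $e$ and free middle coefficients. No gaps.
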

\begin{proof}
	For $1\leq\tau\leq n-1$ and $(1+\pi^{\tau})^{-1}=b_0+b_1\pi+\cdots+b_{e-1}\pi^{e-1}+\pi^e\quad(1\leq e\leq m-1,\ b_0,\ldots,b_{e-1}\in\mathbb{F}_2)$, by Lemma \ref{le4} we know that
	\begin{align*}
		\sum_{l=1}^{m-1}lN^{(\tau)}(l)=&\sum_{1\leq l\leq e-2}l\cdot 2^{m-l-3}+(e-1)(1-(-1)^{b_0})2^{m-(e-1)-3}\\
		&+\sum_{e\leq l\leq m-2}l\cdot 2^{m-l-3}(1+(-1)^{b_0})+\frac{1}{2}(m-1)(1+(-1)^{b_0})\\
		=&2^{m-3}[\sum_{1\leq l\leq e-2}l\cdot 2^{-l}+(e-1)(1-(-1)^{b_0})2^{-e+1}\\&+\sum_{e\leq l\leq m-1}(1+(-1)^{b_0})l\cdot2^{-l}]+\frac{1}{4}(m-1)(1+(-1)^{b_0})
	\end{align*}
	$(\Rmnum{1}.1)$ If $b_0=0$, then
	\begin{align*}
		&\sum_{l=1}^{m-1}lN^{(\tau)}(l)\\
		=&2^{m-3}[\sum_{1\leq l\leq e-2}l\cdot 2^{-l}+\sum_{e\leq l\leq m-1}l\cdot2^{-(l-1)}]+\frac{1}{2}(m-1)\\
		=&2^{m-3}[\sum_{1\leq l\leq e-2}l\cdot 2^{-l}+\sum_{e-1\leq l\leq m-2}(l+1)\cdot2^{-l}]+\frac{1}{2}(m-1)\\
		=&2^{m-3}[\sum_{l=1}^{m-2}l\cdot2^{-l}+\sum_{l=e-1}^{m-2}2^{-l}]+\frac{1}{2}(m-1).
	\end{align*}
	Now we use the identity $\sum\limits_{l=1}^{d}l\cdot 2^{-l}=2-\frac{d+2}{2^d}\ (d\geq1)$ which can be proved by induction. We obtain
	$$\sum_{l=1}^{m-1}lN^{(\tau)}(l)=2^{m-3}[2-\frac{m}{2^m-2}+\frac{1}{2^{e-2}}-\frac{1}{2^{m-2}}]+\frac{1}{2}(m-1)=2^{m-2}+2^{m-e-1}-1.$$
	
	$(\Rmnum{1}.2)$ If $b_0=1$, then
	$$\sum_{l=1}^{m-1}lN^{(\tau)}(l)
	=2^{m-3}[\sum_{1\leq l\leq e-2}l\cdot 2^{-l}+(e-1)2^{-e+2}].$$
	For $e\geq3$,
	$$\sum_{l=1}^{m-1}lN^{(\tau)}(l)
	=2^{m-3}[2-\frac{e}{2^{e-2}}+\frac{e-1}{2^{e-2}}]=2^{m-3}[2-\frac{1}{2^{e-2}}]=2^{m-2}+2^{m-e-1}.$$
	For $e=2$, there is no $l$ satisfing $1\leq l\leq e-2=0$, $\sum\limits_{l=1}^{m-1}lN^{(\tau)}(l)
	=2^{m-3}[0+2^0]=2^{m-3}=2^{m-2}-2^{m-e-1}\ (\text{for}\ e=2).$
	For $e=1$, $\sum\limits_{l=1}^{m-1}lN^{(\tau)}(l)
	=2^{m-3}[0+0]=0=2^{m-2}-2^{m-e-1}\ (\text{for}\ e=1).$
	
	$(\Rmnum{2})$ By Lemma \ref{le3.1}, for $1\leq\tau\leq n-1$,
	$$A_s(\tau)=n-2g(\tau)=n-2(2^{m-2}+\sum_{l=1}^{m-1}lN^{(\tau)}(l)).$$
	For $b_0=0$, $A_s(\tau)=2^m-1-2(2^{m-2}+2^{m-2}+2^{m-e-1}-1)=-(2^{m-e}-1).$ For $b_0=1$, $A_s(\tau)=2^m-1-2(2^{m-2}+2^{m-2}-2^{m-e-1})=2^{m-e}-1.$ Therefore
	$$A_s(\tau)\in\{\pm(2^{m-e}-1)\arrowvert1\leq e\leq m-1\}=\{\pm(2^k-1)\arrowvert1\leq k\leq m-1\}.$$
	Moreover,
	\begin{align*}
		&\{(1+\pi^{\tau})^{-1}\arrowvert 1\leq \tau\leq n-1\}=\mathbb{F}_{2^m}\setminus\{0,1\}\\
		=&\{b_0+b_1\pi+\cdots+b_{m-1}\pi^{m-1}\neq 0,\ 1\arrowvert b_0,\ldots,b_{m-1}\in\mathbb{F}_2\}\\
		=&\{b_0+b_1\pi+\cdots+b_{e-1}\pi^{e-1}+\pi^e\arrowvert b_0,\ldots,b_{e-1}\in\mathbb{F}_2,\ 1\leq e\leq m-1\}.
	\end{align*}
	From part $(\Rmnum{1})$ we know that for $1\leq \tau\leq n-1$ and $1\leq k\leq m-1$,
	$$A_s(\tau)=2^k-1\Leftrightarrow (1+\pi^{\tau})^{-1}=1+b_1\pi+\cdots+b_{e-1}\pi^{e-1}+\pi^e,\ e=m-k.$$
	Therefore
	\begin{align*}
		&\sharp\{1\leq \tau\leq n-1\arrowvert A_s(\tau)=2^k-1\}\\
		=&\sharp\{(b_1,\ldots,b_{e-1})\arrowvert b_1,\ldots,b_{e-1}\in\mathbb{F}_2\}\\
		=&2^{e-1}=2^{m-k-1}.
	\end{align*}
	Similarly, $$A_s(\tau)=-(2^k-1)\Leftrightarrow (1+\pi^{\tau})^{-1}=0+b_1\pi+\cdots+b_{e-1}\pi^{e-1}+\pi^e$$ and
	$$\sharp\{1\leq \tau\leq n-1\arrowvert A_s(\tau)=-(2^k-1)\}=2^{e-1}=2^{m-k-1}.$$
	This completes Theorem \ref{th2}.
\end{proof}
\begin{remark}
	From  Theorem \ref{th2} we know that the conjecture, stated in Section \ref{sec1}, is true. The conjecture is concern on the distribution on absolute values $\lvert A_s(\tau)\rvert\ (1\leq\tau\leq n-1)$. In fact, from Theorem \ref{th2} and the proof we get more informations as following.
	
	$(\Rmnum{1})$.\ $\lvert A_s(\tau)\rvert=2^k-1$ if and only if $(1+\pi^{\tau})^{-1}=b_0+b_1\pi+\cdots+b_{m-k-1}\pi^{m-k-1}+\pi^{m-k}\ (1\leq k\leq m-1).$
	
	$(\Rmnum{2})$.\ $A_s(\tau)=2^k-1\Leftrightarrow b_0=1$ and $A_s(\tau)=-(2^k-1)\Leftrightarrow b_0=0$.
	
	$(\Rmnum{3})$. Both of multiplicity of $\tau\ (1\leq\tau\leq n-1)$ satisfying $A_s(\tau)=2^k-1$ and $A_s(\tau)=-(2^k-1)$ are the same number $2^{m-k-1}$.
\end{remark}

\section{Conflicts of Interest:}
The authors declare that they have no conflicts of interest to report regarding the present study.

\end{document}